\title{Sum-Rate Maximization in Active RIS-Assisted Multi-Antenna WPCN}
\author{Jie Jiang, ~Bin Lyu, ~Pengcheng Chen, ~and ~Zhen Yang}
\address[ ]{School of Communications and Information Engineering, Nanjing University of Posts and Telecommunications, Nanjing 210003, China}
\begin{document}

\maketitle

\begin{abstract}
  In this paper, we propose an active reconfigurable intelligent surface (RIS) enabled hybrid relaying scheme for a multi-antenna wireless powered communication network (WPCN), where the active RIS is employed to assist both wireless energy transfer (WET) from the power station (PS) to energy-constrained users and  wireless information transmission (WIT) from users to the receiving station (RS). For further performance enhancement, we propose to employ both transmit beamforming at the PS and receive beamforming at the RS. We formulate a sum-rate maximization problem by jointly optimizing the RIS phase shifts and amplitude reflection coefficients for both the WET and the WIT, transmit and receive beamforming vectors, and network resource allocation. To solve this non-convex problem, we propose an efficient alternating optimization algorithm with  linear minimum mean squared error criterion, semi-definite relaxation (SDR) and successive convex approximation techniques. Specifically, the tightness of applying the SDR is proved. Simulation results demonstrate that our proposed scheme with 10 reflecting elements (REs) and 4 antennas can achieve 17.78\% and 415.48\% performance gains compared to the single-antenna scheme with 10 REs and passive RIS scheme with 100 REs, respectively. 
\keywords{Wireless powered communication network, active reconfigurable intelligent surface, beamforming, sum-rate maximization.}
\end{abstract}

\section{Introduction}
With the development of the Internet-of-Things (IoT), an intelligent society with ubiquitous interconnections and deep coverage will be truly realized. However, wireless devices (WDs) in IoT networks are generally energy-constrained and suffer from limited lifetime \cite{somov2015Powering}, which fundamentally limits the performance of communication networks. Traditional ways of changing  or recharging the batteries manually are impossible and unacceptable, especially when the number of WDs is numerous. Therefore, how to tackle this  issue  is a critical problem in the widespread development of IoT. Wireless powered communication has been proposed as a prospective technology for enhancing the energy sustainability of WDs, which can be classified into two directions based on application scenarios \cite{varshney2008Transporting,zhong2014Wireless,wu2021Intelligent}. The first one focuses on investigating simultaneous wireless information and power transfer (SWIPT), where the base station (BS) simultaneously transfers energy and information signals to energy receivers and information receivers via the common radio frequency (RF) signals in the downlink (DL), resulting in a pivotal tradeoff between the achievable rate and harvested energy \cite{zhang2013MIMO}. In contrast to the SWIPT, wireless powered communication network (WPCN) has been proposed as a novel type of wireless network diagram to improve the lifetime of WDs and enhance the deployment flexibility of IoT. In a WPCN, energy-constrained WDs first harvest energy in the DL and then use the harvested energy to transmit independent information in the uplink (UL) based on the widely used harvest-then-transmit (HTT) protocol \cite{ju2013Throughput}. 

WPCN has been widely investigated in the literature \cite{ju2013Throughput,ju2014user,ju2014Optimal,kim2016Sum}, which promotes the development of WPCN. However, WPCN generally suffers from the  “doubly near-far” phenomenon if the power station (PS) and the receiving station (RS) are co-located at the hybrid access point (HAP) \cite{ju2013Throughput,ju2014Optimal}. Specifically, a WD located far away from the HAP harvesting less energy in the DL has to transmit information with more power in the UL, which results in an unfair time and resource allocation among the WDs. To deal with the issue, a promising way is to deploy the PS and RS separately \cite{wu2015Energyefficient,wu2015Wirelessa}.  In \cite{wu2015Energyefficient}, multiple users harvest energy from a dedicated PS and then communicate with an information RS following the HTT protocol. In this scenario, a user physically close to the PS is naturally far away from the RS, and vice versa. Considering a similar scenario, a user-centric energy-efficient (EE) problem in WPCN is investigated in \cite{wu2015Wirelessa}. However, the performance  of WPCN is still limited due to the low efficiencies caused by the severe path-loss, which seriously affects its practical applications.

Recently, reconfigurable intelligent surface (RIS), with the unprecedented ability to reshape the wireless transmission environment, has drawn widespread attentions from academia and industry \cite{wu2019Intelligent,direnzo2020Smart,kundu2020RISAssisted,zou2020Wireless,zhang2021Joint,liang2021Reconfigurablea,yu2021Smart}. RIS is comprised of a large number of programmable reflecting elements (REs), which can alter the phase shifts and amplitudes of incident signals. As such, RIS can adaptively modify the impinging radio waves towards the appropriate direction \cite{direnzo2020Smart}. According to the reflection patterns, RIS can be classified as passive RIS and active RIS. Without the property of power amplification, the independent diffusive scatterer-based (IDS)  model accounts for the basic properties of passive RIS, which has been widely adopted in RIS-assisted wireless communications \cite{yu2021Smart}. The passive RIS is only equipped with the phase-shift controller, while the active RIS contains both the phase-shift controller and the active reflection-type amplifier. Hence, the active RIS can alter both the phase shifts and amplitudes of the incident signals. It is worth noting that the active RIS with its novel hardware structure and signal model has been proposed in \cite{zhang2021Active,long2021Active}.  Different from the full-duplex amplify-and-forward (FD-AF) relay that requires power-consuming RF chains, the active RIS can directly reflect and amplify the incident signals in the EM level in a FD manner without reception. In this way,  the active RIS exhibits promising qualities, such as a low power consumption, light weight, conformal geometry and high flexibility for practical deployment.

Currently, the passive RIS has been widely applied in WPCNs for performance enhancement \cite{lyu2021Optimized,zheng2021Joint,hua2022PowerEfficient,xu2021RISenhanced}.  In \cite{lyu2021Optimized}, the passive RIS is employed between the HAP and users to improve both DL WET and UL WIT efficiencies in single-input-single-output (SISO) WPCN. To achieve further performance improvement, the multi-antenna technique is employed in \cite{zheng2021Joint}, where  the HAP with multi-antenna transmits energy signals to users in the DL and receives information from users in the UL by employing  transmit bemforming and receive beamforming, respectively. In \cite{hua2022PowerEfficient}, the fully  dynamic RIS beamforming scheme is proposed for WPCN, for which the phase shift vectors are independently designed over different time slots. However, in the above works \cite{lyu2021Optimized,zheng2021Joint,hua2022PowerEfficient}, the PS and  RS are also co-located at the HAP, which results in performance unfair among users. To address this issue, the authors in \cite{xu2021RISenhanced} consider the scenario where the PS and RS are separately deployed, for which the locations of RIS and users can be carefully considered to achieve a fair performance among users. However, as mentioned above, the passive RIS can only reflect incident signals without amplification, which leads to the limited performance enhancement due to the double-fading effect suffered by the reflecting links. Thus, the energy-constrained users still need to consume much time for harvesting energy in the DL and  have  less time for information transmission.

Inspired by the amplification characteristic of the active RIS, the active RIS is confirmed to be superior to the passive RIS in terms of performance enhancement, and thus has been considered a promising technique for IoT networks  \cite{zhang2021Active,long2021Active,you2021Wireless,dong2021Active,zargari2022Multiuser,gao2022Beamforming,zeng2022Throughput}. The authors in \cite{zhang2021Active} compare the capacity improvement achieved by the active RIS to the passive RIS, which demonstrats that the active RIS can fundamentally mitigate the double-fading effect. In \cite{long2021Active}, an active RIS is applied in single input multiple output (SIMO) systems, for which the joint optimization of phase shifts matrix and receive beamforming is considered to obtain the maximum achievable rate. In \cite{you2021Wireless}, the placement of the active RIS  is optimized to enhance SISO  systems' performance. The authors in \cite{dong2021Active} propose to use the active RIS to  achieve secure transmission, which not only establishes the reliable link from the transmitter to the receiver but also prevents  the confidential information intercepted by the eavesdropper. The active RIS-aided multiuser MISO PS-SWIPT is studied in \cite{zargari2022Multiuser} to minimize the base station transmit power, which shows significant improvements compared to the passive RIS-aided system.  Similarly, in \cite{gao2022Beamforming}, an active RIS is employed to assist SWIPT to boost the efficiency of both WET and WIT, while the conclusions and approaches are inapplicable to the WPCN system because of thire different system models.

Although the active RIS has received a lot of interests for wireless communication networks, the applications of active RIS in WPCN is still at the very early stage and has not been well studied in the literature. To the best of our knowledge, there exists only one paper investigating the usage of active RIS in WPCN \cite{zeng2022Throughput}. Specifically, the authors in \cite{zeng2022Throughput} investigate the weight sum-rate maximization problem in the active RIS assisted single-antenna WPCN, where the PS and RS is co-located at the HAP. As a result, similar to \cite{lyu2021Optimized,zheng2021Joint,hua2022PowerEfficient}, a part of WDs in  \cite{zeng2022Throughput} still suffer from the “doubly near-far” phenomenon, which is not suitable for practical applications with high requirement of performance fairness. Moreover, the authors consider a certain simplified communication scenario where both the HAP and the WDs have single antenna each. The transmit beamforming and receive beamforming cannot be exploited, which is also a key technology for performance enhancement.

Motivated by the observations above, we propose an active RIS enabled hybrid relaying scheme for the multi-antenna WPCN, where the active RIS is employed to facilitate both the WET from the PS to energy-constrained users and the WIT from users to the RS, which is shown in Figure 1. Compared with the existing works which used the passive RIS in WPCN \cite{lyu2021Optimized,zheng2021Joint,hua2022PowerEfficient,xu2021RISenhanced}, our proposed active RIS scheme can amplify the energy signals and information signals to achieve a satisfying system performance. Different from the single-antenna scenario considered in \cite{zeng2022Throughput}, we propose to employ multi-antenna at both the PS and RS, which can construct the transmit beamforming and receive beamforming for further performance enhancement. Specifically, the transmit beamforming at the PS can be used to enhance the WPT efficiency from the PS to users. In the meanwhile, the receive beamforming at the RS can be used to exploit the antenna gain and eliminate the noise caused by the active RIS. In the considered system setup, we aim to maximize the sum-rate problem by jointly optimizing the transmit beamforming at the PS, the receive beamforming at the RS, the reflecting coefficients at the RIS, and network resource allocation. It should be noted that compared to  \cite{lyu2021Optimized,zheng2021Joint,hua2022PowerEfficient,xu2021RISenhanced,zeng2022Throughput}, our formulated problem is much more challenging to solve and the proposed algorithms in \cite{lyu2021Optimized,zheng2021Joint,hua2022PowerEfficient,xu2021RISenhanced,zeng2022Throughput} cannot be used to solve our formulated problem. Thus, we propose an efficient algorithm to solve the formulated problem. The main contributions of this paper are summarized as follows:
\begin{itemize}
\item{We propose an active RIS assisted multi-antenna WPCN for performance enhancement, where the active RIS is served as a hybrid relay to achieve two purposes, i.e., the first one is to assist the WET from the PS to users, and the second one is to aid the WIT from users to the RS. To further improve system performance, both transmit beamforming and receive beamforming techniques are respectively considered at the PS and RS.}

\item{We investigate the sum-rate maximization problem by jointly optimizing transmit beamforming vector at the PS, receive beamforming vectors at the RS, phase shifts and amplitude reflection coefficients at the RIS for both the WET and the WIT, and network resource allocation. To deal with the non-convexity of the formulated problem, we propose an efficient alternating optimization (AO) algorithm. Specifically, the original problem can be divided into four sub-problems, which are solved sequentially in an alternating manner until convergence is achieved.}

\item{For designing the receive beamforming, we apply the linear minimum mean squared error (MMSE) criterion and obtain the closed-form expression. For the optimization of the transmit beamforming and RIS reflecting coefficient matrices for the WET, the semidefinite relaxation (SDR) technique is adopted and the tightness of applying the SDR is proved. For the optimization of RIS reflecting coefficients for the WIT, we obtain the optimal phase shifts in a closed-form and propose a successive convex approximation (SCA) algorithm to determine the optimal amplitude reflection coefficients. In addition, the convergence of proposed problem is analyzed and confirmed via numerical simulations.}

\item{Finally, numerical results are provided to evaluate the performance of proposed scheme, which indicates that compared to the single-antenna scheme with 10 REs and the passive-RIS scheme with 100 REs, the proposed scheme with 4 antennas and 10 REs can achieve 17.78\% and 415.48\% sum-rate gain, respectively.}

\end{itemize}

The rest of this paper is organized as follows. Section II describes the system model of the active RIS-assisted multi-antenna WPCN. The sum-rate maximization problem is formulated in Section III and solved in Section IV, respectively. In Section V, performance is evaluated by numerical results. Finally, this paper is concluded in Section VI.

Notations: In this  paper, vectors and matrices are denoted by boldface lowercase and uppercase letters, respectively. The operators $ (\,\cdot\, )^T$, $ (\,\cdot\, )^H$, $ \left|\, \cdot \,\right| $ and $\left\|\,\cdot \,\right\| $ denote the transpose, conjugate transpose, absolute value and the Euclidean norm, respectively. $\text{Tr}(\,\cdot\, )$ and $\text{rank}(\,\cdot\, )$ denote the trace and rank of a matrix, respectively. $\bm{X}\succeq 0$ represents that $\bm{X}$ is a positive semidefinite matrix. $\mathbb{E} [\,\cdot\, ]$ stands for the statistical expectation. $\bm{I}_N$ denotes the $N$-dimensional identity matrix. $\bm{0}$ denotes the zero matrix/vector with appropriate size. $\mathbb{C} ^{N\times M}$ denotes the set of all $N\times M$ complex-valued matrices. $\mathbb{H}^M$ denotes the set of all $M\times M$ Hermitian matrices. $\mathbb{R}^{N\times 1}$ represents the set of all $N\times 1$ real-valued vectors. ${\mathcal{C} \mathcal{N} }\left(\mu ,\sigma^2\right)$ denotes the distribution of a circularly symmetric complex Gaussian random vector with mean $\mu $ and variance $\sigma^2$. $\text{arg}(\,\cdot\, )$ denotes the phase extraction operation.  $\text{diag}( \bm{x} )$ denotes a diagonal matrix whose diagonal elements are from vector $\bm{x} $. $\mathrm{Im}(\,\cdot\, )$ and $\mathrm{Re}(\,\cdot\, )$ respectively  denotes the imaginary part and real part of a complex number.

\section{System Model }
\label{sysmod}
\begin{figure}[t]
  \centering
  \includegraphics[width=0.45\textwidth]{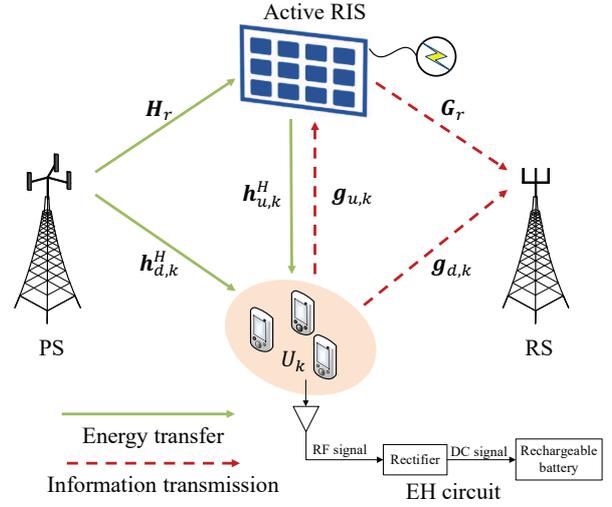}
  \caption{\label{fig:model} System model for an active RIS-assisted multi-antenna WPCN.
  }
\end{figure}

As shown in Figure \ref{fig:model}, we consider an active RIS-aided multi-antenna WPCN, which consists of a PS with $M$ antennas, a RS with $L$ antennas, an active RIS, and $K$ energy-constrained users each with single antenna. We assume that each user is equipped with an energy harvesting (EH) circuit for harvesting energy, where a rectifier is used to convert the received RF signals to direct current (DC) signals. Then, the net energy harvested from the DC signals can be stored in the rechargeable battery. 
The active RIS consists of $N$ active REs, which can steer the reflected signals in a specific direction and also amplify them by the active loads (negative resistance) \cite{long2021Active}. In contrast to the passive RE, each active RE is equipped with an additionally integrated active reflection-type amplifier supported by a power supply. By appropriately setting the effective resistance, it is reasonable to assume that the reflection amplitude and phase of each element is independently \cite{zhang2021Active,long2021Active,you2021Wireless,dong2021Active,zargari2022Multiuser,gao2022Beamforming,zeng2022Throughput}. To power the operations of active RIS and users, the PS is equipped with a stable energy source. In addition, the PS has the capability for performing computational tasks \cite{lyu2021Optimized}. In particular, the users first harvest energy from the RF signals transmitted by the PS and then use the harvested energy to deliver information to the RS. To allievate the severe path-loss suffered by the reflecting links, the active RIS is employed to improve the WET efficiency from the PS to users and the WIT efficiency from the users to the RS.

The channels are assumed to follow a quasi-static flat-fading model. That is, all channel coefficients are constant throughout each  transmission block but vary from block to block \cite{wu2020Weighted}. The downlink baseband equivalent channels of PS-to-RIS, RIS-to-$U_k$, and PS-to-$U_k$ links are denoted by $\bm{H}_r\in \mathbb{C} ^{N\times M}$, $\bm{h}_{u,k}^H\in \mathbb{C}^{1\times N}$, and $\bm{h}_{d,k}^H\in \mathbb{C}^{1\times M}$, respectively, where $U_k$ denotes the $k$-th user. Similarly, the uplink baseband equivalent channels of $U_k$-to-RIS, RIS-to-RS, and $U_k$-to-RS links are respectively denoted by $\bm{g}_{u,k}\in \mathbb{C}^{N\times 1}$, $\bm{G}_r\in \mathbb{C}^{L\times N}$, and  $\bm{g}_{d,k} \in \mathbb{C}^{L\times 1}$, respectively. Since there have been many efficient channel estimation techniques proposed for RIS systems \cite{hu2021robust,zheng2019Intelligent,wu2021Intelligenta,wang2020Channel}, we assume the perfect channel state information can be available in advance, which is a common assumption  considered in \cite{lyu2021Optimized,zheng2021Joint,zheng2020Intelligent} and a prerequisite for investigating the upper-bound of system performance. It should be noted that the channel estimation error is generally inevitable \cite{hu2021robust}. However, the effect of channel estimation error on system performance degradation is out the scope of this paper.

According to the HTT protocol \cite{ju2013Throughput}, the normalized transmission block of interest is divided into $K+1$ time slots. The first time slot with duration of $\tau_0 \in [0,1]$ is a dedicated slot for WET, in which all users harvest energy from the PS with the assistance of active RIS. The remaining $K$ time slots denoted by $\bm{\tau} = [\tau_1, \ldots, \tau_K]$, are used for UL WIT via the time division multiple access (TDMA) scheme. Specifically, during $\tau_k$, $k=1,\cdots,K$, $U_k$  delivers its information to the RS. Without loss of generality, the whole operation time period is set to a normalized transmission block. The network time scheduling constraint is thus given by
\begin{equation}
  \tau_0+\sum_{k=1}^{K}\tau_k \leq  1,~\forall k.\label{time}
\end{equation}

\subsection{Wireless Energy Transfer Phase}

In the WET phase, the PS transmits energy signals to all users with the assistance of the active RIS. Denote the transmitted signal as $\bm{s}=\bm{w}_0s$, where $s$ is the pseudo-random baseband signal transmitted by the PS, and $\bm{w}_0 \in\mathbb{C}^{M\times1}$  is the transmit beamforming vector. The energy constraint at the PS is expressed as 
\begin{equation}
  \mathbb{E} \left[ |\bm{s}|^2\right]=\text{Tr}\left(\bm{w}_0\bm{w}_0^H\right)\le P_0 , \label{W0_WET}
\end{equation}
where $P_0$ denotes the maximum transmit power at the PS. 

The reflecting coefficient matrix of the active RIS in the WET phase is  denoted by $\bm{\Phi}_0=\text{diag}\left\{ \phi_{0,1},\ldots, \phi_{0,N} \right\} \in \mathbb{C}^{N \times N}$ with $\phi_{0,n} =a_{0,n}e^{j\theta_{0,n}} $, $n=\,1,\ldots,\,N$, where $a_{0,n}$ and $\theta_{0,n}$ represent the amplitude reflection coefficient and phase shift of the $n$-th RE, respectively. Without loss of generality, we suppose each active RE has the following constraints
\begin{equation}
  a_{0,n} \le {a_{n,max}},\,\, 0 \le {\theta_{0,n}} \le 2\pi, ~\forall n,\label{RE_WET}
\end{equation}
where $a_{n,max}$ is the maximum amplitude reflection coefficient of $n$-th RE. It is worth noting that $ a_{n,max}$ can be greater than 1 \cite{you2021Wireless}, which is a main characteristic distinguishing the active RIS from the passive RIS since the active load can amplify the reflected signals.

The received signal at $U_k$ during $\tau_0$ is given by 
\begin{equation}
    {y}_{u,k}=\underbrace{\bm{h}_{d,k}^H \bm{s}}_{\text{direct link}}+ \underbrace{\bm{h}_{u,k}^H \bm{\Phi}_0 \left(\bm{H}_r \bm{s}+ \bm{n}_v\right)}_{\text{RIS-aided link}}+ {n}_{u,k},~\forall k,\label{ReceivedSignal}
\end{equation}
where ${n}_{u,k} \in \mathbb{C}$ and $\bm{n}_v \in\mathbb{C}^{N\times1} $ represent the additive white Gaussian noise (AWGN) at $U_k$ and the RIS, respectively. Without loss of generality, we assume  ${n}_{u,k}\sim\mathcal{CN}\left(0,\sigma_{u,k}^2 \right)$ and $\bm{n}_v\sim \mathcal{CN}\left(0,\sigma_v^2 \bm{I}_N\right)$. Denote the equivalent downlink channel as $\bm{h}_k^H = \bm{h}_{u,k}^H \bm{\Phi}_0 \bm{H}_r+ \bm{h}_{d,k}^H \in \mathbb{C}^{1\times M}$ and  \eqref{ReceivedSignal} can be rewritten as 
\begin{equation}
  {y}_{u,k} = \bm{h}_k^H \bm{w}_0 s + \bm{h}_{u,k}^H \bm{\Phi}_0 \bm{n}_v+ {n}_{u,k}, ~\forall k.
\end{equation}
Due to the fact that the active RIS not only amplifies the desired signal, i.e., $\bm{s}$, but also amplifies the input noise, i.e., $\bm{n}_v$, it is reasonable to consider the second term of  \eqref{ReceivedSignal} for computing the amount of harvested energy accurately \cite{long2021Active}. 
However, the noise at $U_k$ is generally quite small and can be negligible. Accordingly, the harvested energy by $U_k$, denoted by $E_k$, is given by 
\begin{equation}
    E_k =  \beta {\left|{\bm{h}_k^H}\bm{w}_0 \right|^2} \tau_0 + \beta{{\left\|{\bm{h}_{u,k}^H}\bm{\Phi}_0 \right\|^2}{\sigma_v^2}} \tau_0, ~\forall k,
\end{equation}
where  $\beta\in\left(0,1\right]$ denotes the energy conversion efficiency of each user. It is practical for us to consider the linear EH model here, which is also a common assumption in the literature \cite{ju2013Throughput,ju2014Optimal,wu2015Wirelessa,wu2015Energyefficient,zheng2021Joint,zeng2022Throughput}.

It is worth noting that the active RIS can allocate the available reflecting power to amplify the incident signals with active loads \cite{long2021Active}. In the DL WET phase, the amplification power of $\bm{s}$ and $\bm{n}_v$ is limited by the RIS power budget, which is shown by the following constraint
\begin{equation}
  {P_0} \left\| \bm{\Phi}_0 \bm{H}_r \right\|^2 + {\sigma_v^2}\left\|\bm{\Phi}_0 \bm{I}_N\right\|^2\le P_r,\label{Pr_WET}
\end{equation}
where $P_r$ is the maximum reflecting power for amplification at the active RIS and substantially lower than that of an active RF amplifier \cite{you2021Wireless}.

\subsection{Wireless Information Transmission Phase}
In the WIT phase, the users utilize the harvested energy to transmit information to the RS via a TDMA manner. Let $f_k$ denotes the information-carrying signal of $U_k$  with unit power and then the transmit signal  during $\tau_k$ is denoted by $x_k=\sqrt{p_k}f_k$, where $p_k$ is the transmit power at $U_k$. We assume that all the harvested energy at $U_k$ in the WET phase is used for delivering its own information. Let $\bm{p} = [p_1,\ldots,p_K] \in \mathbb{R} ^{1 \times K}$, which satisfies
\begin{equation}
  p_k\tau_k\le E_k, ~\forall k.\label{E_k1}
\end{equation}

Similarly, the reflecting coefficient matrix at the active RIS for the WIT during $\tau_k$ is denoted by  $\bm{\Phi}_k= \text{diag}\left\{ \phi_{k,1},\ldots, \phi_{k,N} \right\} \in \mathbb{C}^{N \times N}$, where $\phi_{k,n} =a_{k,n}e^{j\theta_{k,n}} $, $a_{k,n}$ and $\theta_{k,n}$ have the following constraints
\begin{equation}
  a_{k,n} \le {a_{n,max}},\,\, 0 \le {\theta_{k,n}} \le 2\pi, ~\forall k, ~\forall n.\label{RE_WIT}
\end{equation}

The received signal at the RS from $U_k$ with the assistance of active RIS during $\tau_k$ is written as
\begin{equation}
  \bm{y}_{r,k}=\underbrace{\bm{g}_{d,k}^H x_k}_{\text{direct link}}+ \underbrace{\bm{G}_r \bm{\Phi}_k \left(\bm{g}_{u,k} x_k+ \bm{n}_v\right)}_{\text{RIS-aided link}}+ \bm{n}_r,~\forall k,\label{y_r}
\end{equation}
where  $\bm{n}_r\in\mathbb{C}^{L\times1} $ represents the AWGN at the RS and satisfies $\bm{n}_r\sim\mathcal{CN}\left(0,\sigma_r^2 \bm{I}_N \right)$. 

In the UL WIT phase, we also have the amplification power constraint at the active RIS as follows
\begin{equation}
  {p_k} \left\|\bm{\Phi}_k {\bm{g}_{u,k}}\right\|^2 + {\sigma_v^2}\left\|\bm{\Phi}_k \bm{I}_N\right\|^2\le P_r,~\forall k. \label{Pr_WIT}
\end{equation}

Denote the receive beamforming vector at the RS during $\tau_k$ as  $\bm{w}_k\in \mathbb{C}^{L\times 1}$, which can be used to extract the desired signal and suppress interference and noises. Let the equivalent uplink channel be $\bm{g}_k = \bm{g}_{d,k} + \bm{G}_r \bm{\Phi}_k \bm{g}_{u,k},~ \forall k$. The estimated signal at the RS during $\tau_k$ is expressed as 
\begin{equation}
  \begin{aligned}
    \bm{u}_k = &\bm{w}_k^H \bm{y}_{r,k}\\ 
    =&\bm{w}_k^H \bm{g}_k x_k + \bm{w}_k^H \bm{G}_r \bm{\Phi}_k \bm{n}_v + \bm{w}_k^H \bm{n}_r, ~\forall k.
  \end{aligned}
\end{equation}

Then, the signal-noise-ratio (SNR) at the RS during $\tau_k$ is written as
\begin{equation}
\label{SNR}
  \gamma_k= \frac{p_k\left|{\bm{w}_k^H}\left(\bm{G}_r \bm{\Phi}_k {\bm{g}_{u,k}} +{\bm{g}_{d,k}} \right)\right|^2}{\left\| \bm{w}_k^H \bm{G}_r \bm{\Phi}_k \right\|^2 {\sigma_v^2} + {\left\|{\bm{w}_k}\right\|^2\sigma_{r}^2}},~\forall k.
\end{equation}

Denote the achievable rate from $U_k$ to the RS as $R_k$, which  is  formulated as
\begin{equation}
  R_k = \tau_k \log \left(1+\gamma_k\right),~\forall k. 
\end{equation}

\section{PROBLEM FORMULATION}
In this section, we formulate the system sum-rate maximization problem by jointly optimizing the reflecting coefficients for both the WET and the WIT at the active RIS, the transmit beamforming at the PS, the receive beamforming at the RS, the transmit power at each user, and the network time scheduling. The optimization problem is formulated as 

\begin{align}
    (\mbox{\textbf{P1}}) &\max_{\substack{ \tau_0,\bm{\tau}, \bm{\Phi}_0, \bm{\Phi}_k, \bm{w}_0, \bm{p},{\bm{w}_k}}} \sum_{k=1}^K R_k \label{pro_obj}\\
    s.t. \,\,\,\,\,\,&\mbox{\eqref{time}, \eqref{W0_WET}, \eqref{RE_WET}, \eqref{Pr_WET}, \eqref{E_k1}, \eqref{RE_WIT} and \eqref{Pr_WIT},}\nonumber\\
    & {\tau_0} \ge 0,\, {\tau_k}\ge 0,  {p_k} \ge 0,\, ~\forall k. \label{positive}
\end{align}
where \eqref{positive} indicates that the time and power variables are all nonnegative.

One can observe that the objective function in \eqref{pro_obj} is a non-concave function due to the coupled of variables. In addition, there exists several non-convex constraints, i.e., \eqref{W0_WET}, \eqref{Pr_WET}, \eqref{E_k1} and \eqref{Pr_WIT}. It is thus challenging to solve \textbf{P1} directly by  standard optimization techniques.  In the next section, we propose an  AO algorithm to solve it efficiently.

\section{ALTERNATING OPTIMIZATION SOLUTION}

In this section, an efficient AO algorithm is proposed to solve \textbf{P1}. In particular, we  decompose \textbf{P1} into several subproblems and iteratively solve them in an alternating manner. To show the procedure of AO algorithm, we summarize a flow chart in Figure 2. Specifically, the variables are partitioned into four blocks, $\{\bm{w}_k \}$, $\{\bm{w}_0, \bm{\tau}, \bm{p}\}$, $\{\bm{\Phi}_0, \tau_0,\bm{\tau}, \bm{p}\}$, and $\{\bm{\Phi}_k\}$. Then, the variables in each block are alternately  solved by its corresponding sub-problem with the other blocks fixed until the convergence is achieved.

\begin{figure}[t]
  \centering
  \includegraphics[width=0.45\textwidth]{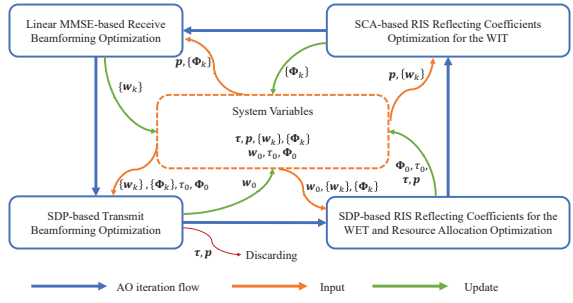}
  \caption{\label{fig:flow}A flow chart of the proposed algorithm.
  }
\end{figure}

\subsection{Linear MMSE-based Receive Beamforming  Optimization}
\label{MMSESection}
With the other variables fixed, we first design the receive beamforming vectors $\{\bm{w}_k\}$. To cope with the interference  caused by $\bm{n}_v$ and $\bm{n}_r$ in \eqref{SNR}, we  apply the linear MMSE criterion here. Based on this criterion, the MMSE-based receive beamforming is given by

  \begin{equation}
    \bm{w}_k^* = \left(\bm{g}_k\bm{g}_k^H + \frac{\sigma_v^2}{p_k}\bm{G}_r\bm{\Phi}_k\bm{\Phi}_k^H \bm{G}_r^H + \frac{\sigma_r^2}{p_k} I_L \right)^{-1} \bm{g}_k, ~\forall k. \label{MMSE}
  \end{equation}

\subsection{SDP-based Transmit Beamforming Optimization}
\label{EBSection}

We then proceed to optimize $\{\bm{w}_0, \bm{\tau}, \bm{p}\}$ with the other variables $\{\bm{w}_k,\bm{\Phi}_0, \tau_0, \bm{\Phi}_k\}$ fixed. Letting $e_k = {p_k}{\tau_k}$ and $\bm{e}= [e_1, \ldots, e_K]$ and  applying the obtained results in \eqref{MMSE}, \textbf{P1} can be simplified as follows
\begin{align}
  (\mbox{\textbf{P2}}) \max_{\substack{ {\bm{w}_0},{\bm{\tau}}, {\bm{e}}}} &\sum_{k=1}^K \tau_k \log \left(1+{\epsilon_k} \frac{e_k}{\tau_k}\right) \label{pro_obj_2}\\
  s.t. \,\,\,\,\,\,&{e_k} \le E_k,~\forall k, \label{E_k2}\\
  & {\tau_k}\ge 0, \, {e_k} \ge 0,~ \forall k, \label{positive_2}\\
  & \mbox{\eqref{time}, \eqref{W0_WET} and \eqref{Pr_WIT},}\nonumber
\end{align}
where $\epsilon_k = \frac{\left|{(\bm{w}_k^*)^H}\left({\bm{G}_r}{\bm{\Phi}_k}{\bm{g}_{u,k}} +{\bm{g}_{d,k}} \right)\right|^2}{\left\|(\bm{w}_k^*)^H{\bm{G}_r}\bm{\Phi}_k\right\|^2 {\sigma_v^2} + {\left\|{\bm{w}_k^*}\right\|^2\sigma_{r}^2}}$, $ \forall k$. It can be found that \textbf{P2} is highly  non-convex.  To solve it efficiently, the SDR technique \cite{luo2010Semidefinite} is employed. Define $\bm{H}_{k} =  \bm{h}_k \bm{h}_k^H $, and $ \bm{G}_{u,k} = \text{diag}\left(\left|\bm{g}_{u,k,1}\right|^2,\left|\bm{g}_{u,k,2}\right|^2,\hdots,\left|\bm{g}_{u,k,N}\right|^2\right) $, $\forall k$. Let  $\bm{W}_0=\bm{w}_0 \bm{w}_0^H $, which satisfies ${\bm{W}_0} \succeq 0$ and $\text{rank}(\bm{W}_0)=1$. Then, \eqref{E_k2} is rewritten as 
\begin{equation}
  {e_k} \le \beta {\tau_0} [{\rm Tr}\left(\bm{H}_{k} \bm{W}_0 \right) + \left\|\bm{h}_{u,k}^H\bm{\Phi}_0\right\|^2 {\sigma_v^2} ], \forall k.\label{E_k3}
\end{equation}
Denote the  RIS reflecting coefficient vector for the WET as $\bm{\varphi}_k =\left[{\phi_{k,1}},{\phi_{k,2}},\hdots,{\phi_{k,N}} \right]^T$, $\forall k$, \eqref{Pr_WIT} can be recast as
\begin{equation}
  {e_k} {\bm{\varphi}_k^H}{G_{u,k}{\bm{\varphi}_k}} + {\tau_k}{\sigma_v^2}{\bm{\varphi}_k^H}{\bm{\varphi}_k}\le {\tau_k}P_r,~\forall k.\label{Pr_WIT2}
\end{equation}
Then, \textbf{P2} can be equivalently transformed into 
\begin{align}
  (\mbox{\textbf{P2-1}})\max_{\bm{W}_0,\bm{\tau}, \bm{e}} \,\,\,\,\,\,&\sum_{k=1}^K \tau_k \log \left(1+{\epsilon_k} \frac{e_k}{\tau_k}\right) \label{pro_w}\\
  s.t. \,\,\,\,\,\,& {\rm Tr}\left(\bm{W}_0 \right) \le P_0, \label{W_0}\\
  & {\bm{W}_0} \succeq 0,\\
  & \text{rank}(\bm{W}_0)=1, \label{W_0_rankone}\\
  &\mbox{\eqref{time}, \eqref{positive_2}, \eqref{E_k3} and \eqref{Pr_WIT2}}.\nonumber
\end{align}

Since the rank-one constraint in \eqref{W_0_rankone} is non-convex, we employ the SDR technique to relax it. Thus,  \textbf{P2-1} becomes to be a convex semidefinite program (SDP) and can be solved with the interior-point method \cite{boyd2004Convex}.
\begin{proposition}
\label{RankOne}
The optimal transmit beamforming matrix obtained by solving the relaxed version of \textbf{P2-1}, denoted by $\bm{W}_0^*$, is rank-one. 
\end{proposition}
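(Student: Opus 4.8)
The plan is to analyze the Karush–Kuhn–Tucker (KKT) conditions of the relaxed SDP \textbf{P2-1} and show that any optimal $\bm{W}_0^*$ must have rank exactly one, so that the SDR is tight and no Gaussian-randomization step is needed. First I would observe that $\bm{W}_0$ enters the relaxed problem only through (i) the objective, via the terms $E_k$ appearing inside $\log(1+\epsilon_k e_k/\tau_k)$ through constraint \eqref{E_k3}, i.e.\ through $\mathrm{Tr}(\bm{H}_k\bm{W}_0)$ with $\bm{H}_k = \bm{h}_k\bm{h}_k^H \succeq 0$; (ii) the power budget \eqref{W_0}, via $\mathrm{Tr}(\bm{W}_0)\le P_0$; and (iii) the positive-semidefiniteness constraint $\bm{W}_0\succeq 0$. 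Since for fixed $\bm{e},\bm{\tau}$ the objective is monotonically increasing in each $E_k$ and hence in each $\mathrm{Tr}(\bm{H}_k\bm{W}_0)$, the constraints \eqref{E_k3} will be active at optimality in the sense that increasing any $\mathrm{Tr}(\bm{H}_k\bm{W}_0)$ can only help; this lets me treat the problem, with $\bm{\tau},\bm{e}$ frozen, as essentially a feasibility/trace-bounded problem in $\bm{W}_0$ whose only "costly" resource is $\mathrm{Tr}(\bm{W}_0)$.

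Next I would write the Lagrangian of the relaxed \textbf{P2-1} with respect to $\bm{W}_0$ only, introducing a multiplier $\mu \ge 0$ for \eqref{W_0}, multipliers $\lambda_k \ge 0$ for the energy constraints \eqref{E_k3} (rewritten so $\bm{W}_0$ appears on the left), and a PSD matrix multiplier $\bm{Z}\succeq 0$ for $\bm{W}_0\succeq 0$. The stationarity condition $\partial \mathcal{L}/\partial \bm{W}_0 = \bm{0}$ then reads, schematically,
\begin{equation}
\bm{Z} = \mu \bm{I}_M - \sum_{k=1}^K \lambda_k \beta \tau_0 \bm{H}_k,
\end{equation}
together with complementary slackness $\bm{Z}\bm{W}_0^* = \bm{0}$. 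Since each $\bm{H}_k = \bm{h}_k\bm{h}_k^H$ has rank one, the matrix $\sum_k \lambda_k\beta\tau_0 \bm{H}_k$ has rank at most $K$ (and often much less), so $\bm{Z} = \mu\bm{I}_M - \sum_k \lambda_k\beta\tau_0\bm{H}_k$. The key algebraic point is to argue $\mu > 0$: if $\mu = 0$ then $\bm{Z} = -\sum_k \lambda_k\beta\tau_0\bm{H}_k \preceq 0$, but $\bm{Z}\succeq 0$ forces $\bm{Z} = \bm{0}$ and hence $\lambda_k = 0$ for all $k$, which contradicts the fact that at least one energy constraint is active at the optimum (otherwise the objective could be improved by scaling $\bm{W}_0^*$ up toward the power budget). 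With $\mu > 0$, the matrix $\mu\bm{I}_M$ is full rank, so $\bm{Z} = \mu\bm{I}_M - \sum_k\lambda_k\beta\tau_0\bm{H}_k$ has rank at least $M - K$; more sharply, I would show $\mathrm{rank}(\bm{Z}) \ge M-1$ by arguing that $\sum_k \lambda_k\beta\tau_0\bm{H}_k$ effectively contributes a rank-one deficiency at the optimum — this is where I would lean on the structure that the active energy constraints all pull $\bm{W}_0$ in a common "best" direction, or more carefully, on a rank inequality of the form $\mathrm{rank}(\bm{W}_0^*) \le M - \mathrm{rank}(\bm{Z})$ from $\bm{Z}\bm{W}_0^* = \bm{0}$ combined with $\mathrm{rank}(\bm{Z}) = M-1$. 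Since $\bm{W}_0^* \ne \bm{0}$ (the all-zero solution is suboptimal), we conclude $\mathrm{rank}(\bm{W}_0^*) = 1$.

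I expect the main obstacle to be pinning down $\mathrm{rank}(\bm{Z}) = M-1$ rigorously, i.e.\ ruling out the case where several energy constraints are simultaneously active with independent channel directions $\bm{h}_k$, which would naively allow $\mathrm{rank}(\bm{Z})$ as low as $M-K$ and $\mathrm{rank}(\bm{W}_0^*)$ as high as $K$. The cleanest way around this is probably to reformulate: after freezing $\bm{\tau},\bm{e}$, note that the remaining feasibility problem in $\bm{W}_0$ asks to satisfy $\mathrm{Tr}(\bm{H}_k\bm{W}_0) \ge c_k$ for constants $c_k$ (derived from the active $e_k$ values and the RIS-noise term) under $\mathrm{Tr}(\bm{W}_0)\le P_0$; but actually in the AO iteration $\bm{e}$ is a variable too, and the objective is strictly increasing in $\mathrm{Tr}(\bm{H}_k\bm{W}_0)$ for exactly one "bottleneck" user at a time in a way that, after careful bookkeeping, only one $\lambda_k$ can be positive — or, failing that clean a statement, I would invoke the standard SDP rank-reduction result (e.g.\ the Pataki/Barvinok bound, or the specific lemma in \cite{luo2010Semidefinite}) guaranteeing an optimal $\bm{W}_0^*$ with $\mathrm{rank}(\bm{W}_0^*)^2 \le$ number of active affine constraints, then sharpen it to rank one using the single-trace-constraint plus PSD structure exactly as in the classic transmit-beamforming SDR tightness proofs. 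I would present whichever of these yields the cleanest writeup, most likely the direct KKT argument with the observation that the objective's monotonicity collapses the active-constraint set enough to force rank one.
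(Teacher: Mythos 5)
Your skeleton is the same as the paper's: form the Lagrangian of the relaxed \textbf{P2-1} with multipliers $\lambda_k\ge 0$ for the energy constraints \eqref{E_k3}, a multiplier $\xi\ge 0$ for the trace budget \eqref{W_0}, and a PSD multiplier for $\bm{W}_0\succeq 0$; combine stationarity with complementary slackness to get $\bigl(\xi^*\bm{I}_M-\sum_{k}\lambda_k^*\beta\bm{H}_{k}\bigr)\bm{W}_0^*=\bm{0}$; argue $\xi^*>0$; and finish with a rank count plus the observation that $\bm{W}_0^*=\bm{0}$ is suboptimal. You also correctly identify the decisive step: showing the dual slack matrix has rank at least $M-1$ rather than the naive lower bound $M-K$.

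However, you do not actually close that step, and the fallbacks you propose would not work. The suggestion that only one $\lambda_k$ can be positive is false in this problem: at optimality every user's energy constraint \eqref{E_k3} is tight (otherwise the corresponding $e_k$ could be increased and the objective improved), so the paper explicitly has $\lambda_k^*>0$ for all $k$ and the weighted sum $\sum_k\lambda_k^*\beta\bm{H}_k$ genuinely has rank $\min(K,M)$. The Pataki/Barvinok rank bound with $K+1$ affine constraints only yields $\mathrm{rank}(\bm{W}_0^*)\bigl(\mathrm{rank}(\bm{W}_0^*)+1\bigr)/2\le K+1$, i.e.\ rank on the order of $\sqrt{2K}$, not one. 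The paper's actual resolution is a genericity argument: since $\bm{\Omega}^*=\xi^*\bm{I}_M-\sum_k\lambda_k^*\beta\bm{H}_k\succeq 0$ and $\bm{W}_0^*\neq\bm{0}$ force $\xi^*=\lambda_{\max}\bigl(\sum_k\lambda_k^*\beta\bm{H}_k\bigr)$, the null space of $\bm{\Omega}^*$ is the top eigenspace of a positively weighted sum of independent rank-one matrices $\bm{h}_k\bm{h}_k^H$, and for continuously, independently distributed channels that largest eigenvalue is simple with probability one; hence $\mathrm{rank}(\bm{\Omega}^*)\ge M-1$ and $\mathrm{rank}(\bm{W}_0^*)\le 1$ (the paper cites \cite{xu2021Reconfigurable} for this fact). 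Without that almost-sure simplicity argument, or an equivalent explicit rank-reduction construction, your write-up stops short of establishing rank one.
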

\begin{proof}
Please refer to Appendix A.
\end{proof}

According to Proposition \ref{RankOne}, the tightness of SDR is guaranteed. Hence, we can employ Cholesky decomposition to obtain the optimal energy beamforming vector $\bm{w}_0^*$.

\subsection{SDP-based RIS Reflecting Coefficients for the WET and Resource Allocation Optimization}

In this sub-section, we focus on optimizing the reflecting beamforming at the RIS in the WET phase, the transmit power at each user, and the network time scheduling. Since $\tau_0$ and $\bm{\Phi}_0$ are coupled, we first optimize $\{\bm{\Phi}_0, \bm{\tau}, \bm{p}\}$ with $\tau_0$ given.
Define ${\bm{\Psi}_0 = \bm{\tilde{\varphi} _0}\bm{\tilde{\varphi }_0}^H}$ with  $ {\bm{\Psi}_0 } \succeq 0$ and $\text{rank}(\bm{\Psi}_0 )=1$, where ${\bm{\tilde{\varphi }_0} = [{\bm{\varphi _0}^H} , 1]^H }$ and $\bm{\varphi}_0 =\left[{\phi_{0,1}},\hdots,{\phi_{0,N}} \right]^T$. Let $ \bm{H}_{u,k} = \text{diag}\left\{\bm{h}_{u,k,1}, \hdots, \bm{h}_{u,k,N} \right\}$ and $ \bm{Q}_{u,k} = \text{diag}\left\{\left|\bm{h}_{u,k,1}\right|^2, \hdots,\left|\bm{h}_{u,k,N}\right|^2 , 1\right\} $. 
Then,   \eqref{Pr_WET} and  \eqref{E_k2} are respectively reformulated as
\begin{equation}
  {P_0} {\rm Tr}({\tilde{\bm{H}}_r} \bm{\Psi}_0  ) + {\sigma_v^2} {\rm Tr}(\bm{\Psi}_0 )\le P_r,\label{Pr_WET3}
\end{equation}
\begin{equation}
  {e_k} \le \beta {\tau_0}{\rm Tr}[( \bm{V} + {\sigma_v^2} \bm{Q}_{u,k}) \bm{\Psi}_0] -\beta {\tau_0}{\sigma_v^2},~\forall k,\label{E_k4}
\end{equation}
where
$$
\bm{V} = \left[\begin{matrix}
 \bm{H}_{u,k}^H \bm{H}_r \bm{W}_0 \bm{H}_r^H \bm{H}_{u,k}&\bm{H}_{u,k}^H \bm{H}_r \bm{W}_0 \bm{h}_{d,k}\\
 \bm{h}_{d,k}^H \bm{W}_0 \bm{H}_r^H \bm{H}_{u,k}&\bm{h}_{d,k}^H \bm{W}_0 \bm{h}_{d,k}
\end{matrix}\right],
$$
and
$$
   \tilde{\bm{H}}_r ={ \left[\begin{matrix}
    \bm{H}_r \bm{H}_r^H&\bm{0}\\
    \bm{0}&\bm{0}
   \end{matrix}\right]}.
$$

With the obtained solutions in Sections \ref{MMSESection} and \ref{EBSection}, \textbf{P1} can be  equivalently written as 

\begin{align}
  (\mbox{\textbf{P2-2}})\max_{\substack{{\bm{\Psi}_0},\bm{\tau}, \bm{e}}} &\sum_{k=1}^K \tau_k \log \left(1+{\epsilon_k} \frac{e_k}{\tau_k}\right) \label{pro_exp}\\
  s.t. \,\,\,\,\,\,& {\bm{\Psi}_0} \succeq 0,\, \text{rank}(\bm{\Psi}_0)=1, \label{Psi_rankone}\\
  & {[\bm{\Psi}_0]_{n,n}} \le {a_{max}^2},~ \forall n,\\
  & {[\bm{\Psi}_0]_{N+1,N+1}} = 1,  \,\,\,\,\, \\
  & \mbox{\eqref{time}, \eqref{positive_2}, \eqref{Pr_WIT2}, \eqref{Pr_WET3} and \eqref{E_k4}.}\nonumber
\end{align}

Similarly, after the relaxation of the rank-one constraint in \eqref{Psi_rankone}, $\mbox{\textbf{P2-2}}$ is also an SDP and can be solved by the interior-point method.  Recall that the tightness of optimizing $\bm{W}_0$ by SDR can be  guaranteed, we can also prove that the obtained solution $\bm{\Psi}_0$ is rank-one. Then, $\bm{\tilde{\varphi }}_0$ can be recovered by implementing  Cholesky decomposition of $\bm{\Psi}_0$, and the optimal reflection coefficient vector for the WET $\bm{\varphi _0}^*$ can be obtanied by linear operation from $\bm{\tilde{\varphi }}_0$. Subsequently, the optimal RIS reflecting coefficient matrix $\bm{\Phi}_0^*$ can be obtained by $\bm{\Phi}_0^*\,=\, \text{diag}((\bm{{\varphi }}_0^H)^*)$.

Finally, we continue to update the optimal energy transmission time $\tau_0 \in [0,1]$ by the one-dimensional search method. Thus, the maximum sum-rate of this sub-problem is achieved with the optimal solution $\{\bm{\Phi}_0^*, \tau_0^*,\bm{\tau}^*, \bm{p}^*\}$. The procedure is summarized in Algorithm \ref{al:sdp2}.
\begin{figure}[h]
    \begin{algorithm}[H]
    \renewcommand{\algorithmicrequire}{\textbf{Input:}}
    \renewcommand{\algorithmicensure}{\textbf{Output:}}
		\caption{\label{al:sdp2}SDP-based  RIS reflecting coefficients for the WET and resource allocation optimization}
		\begin{algorithmic}[1]
		    \REQUIRE {$\bm{w}_0,\{\bm{w}_k\},\{\bm{\Phi}_k\},~\forall k$}
        \ENSURE {$\bm{\Phi}_0^*,\tau_0^*,\bm{\tau}^*,\bm{p}^*$}
        \STATE{Initialization: The maximum objective function value $R_{\max}= 0$ and the step size $\delta$.}
		    \FOR{$\tau_0 = 0:\delta:1 $}
		    \STATE {Given $\bm{w}_0,\{\bm{w}_k\},\{\bm{\Phi}_k\}$, we obtain ${\tau_{0}^\prime}$, ${\bm{\Psi}_0^\prime}$, ${\tau_k^\prime}$, ${e_k^\prime}$, $~\forall k$ by solving  $\mbox{\textbf{P2-2}}$.}
		    \STATE {Calculate $R\,=\, \sum_{k=1}^K \tau_k^\prime \log \left(1+{\epsilon_k} \frac{e_k^\prime}{\tau_k^\prime}\right)$.}
		    \IF{$R > R_{\max}$}
		    \STATE{Update $R_{max}\leftarrow R$.}
		    \STATE{Update $\tau_0\leftarrow \tau_0^\prime,\bm{\Psi}_0\leftarrow\bm{\Psi}_0^\prime,\tau_k\leftarrow\tau_k^\prime,e_k\leftarrow e_k^\prime$.}
		    \ENDIF
		    \ENDFOR
		    \STATE{Obtain $\bm{\tilde{\varphi }_0}$ from $\bm{\Psi}_0$ by Cholesky decomposition.}
		    \STATE{Obtain $\bm{{\varphi }}_0$ from $\bm{\tilde{\varphi }}_0$.}
		    \STATE{Set $\bm{\Phi}_0^*\,=\, \text{diag}((\bm{{\varphi }}_0^H)^*)$.}
		    \STATE{Calculate $p_k^* \,=\,e_k^*/\tau_k^*$.}
            \RETURN {$\bm{\Phi}_0^*,\tau_0^*,\bm{\tau}^*,\bm{p}^*$.}
		\end{algorithmic}
	\end{algorithm}
\end{figure}

\subsection{SCA-based RIS Reflecting Coefficients Optimization for the WIT}

In this sub-section, we investigate the optimization of RIS reflecting coefficient matrix $\bm{\Phi}_k$ in the WIT phase, which is given by

\begin{align}
  (\mbox{\textbf{P3}}) &\max_{\substack{ \bm{\Phi}_k}} \sum_{k=1}^K R_k \label{pro_phi0}\\
  s.t. \,\,\,\,\,\,& \mbox{\eqref{RE_WIT} and \eqref{Pr_WIT}.}\nonumber
\end{align}

 Note that \textbf{P3} is still a non-convex optimization problem as the active RIS introduces additional noise term in the denominator of the objective function, which results in a quadratic fractional programming problem. In fact, the RIS reflecting coefficients  include amplitude reflection coefficients and phase shifts. To simplify this problem, we derive  the optimal phase shifts in the closed-form and then exploit an SCA algorithm to obtain the near-optimal amplitude reflection coefficients according to \cite{long2021Active}.

It is worth noting that  \textbf{P3} can be decomposed into $K$ independent subproblems, each of which maximizes the SNR of $U_k$ at the RS during $\tau_k$ with respect to the RIS reflection coefficient vector $\bm{\varphi}_k$.
Specifically, with  $\bm{w}_k^*$ obtained in \eqref{MMSE} and introducing some new auxiliary variables, the $k$-th SNR maximization problem can be formulated as
\begin{equation}\label{gamma}
\gamma _k =  \frac{{p_k}\left|{\bm{b}_k^H} {\bm{\varphi}_k} +{g_{d,k}} \right|^2} {{{\bm{\varphi}_k^H}{\bm{Q}_r}{\bm{\varphi}_k}}{\sigma_v^2} + {\sigma_r^2}},
\end{equation}
where  $g_{d,k}=\bm{w}_k^H \bm{g}_{d,k}$,  $\bm{g}_{r,k}^H = \bm{w}_k^H \bm{G}_r$, $\bm{b}_k^H = \bm{g}_{r,k}^H \text{diag}\left(\bm{g}_{u,k} \right)$,  $\bm{Q}_r = \text{diag}\left\{\left|\bm{g}_{r,k,1}\right|^2, \hdots,\left|\bm{g}_{r,k,N}\right|^2\right\}$, $\forall k$. Let  $ \bm{F}_k = {p_k}{\bm{G}_{u,k}} + {\sigma_v^2}{I_N} $, $\forall k$. The subproblem can be expressed as 
\begin{align}
  (\mbox{\textbf{P4}}) \max_{{\bm{\varphi}_k}} \,\,\,&\gamma _k \label{pro_phik}\\
  s.t. \,\,\,\,\,\,& {\bm{\varphi}_k^H}{\bm{F}_k}{\bm{\varphi}_k} \le P_r,  ~\forall k, \label{Pr_WIT3}\\
  & \mbox{\eqref{RE_WIT}.}\nonumber
\end{align}

To solve \textbf{P4}, we decompose the optimization of RIS reflecting coefficient vector $\bm{\varphi}_k$ into two sub-problems for  the  amplitude reflection coefficient design  and the optimal phase shift design,  respectively. Let $\bm{\varphi}_k =\bm{\varTheta}_k \bar{\bm{\varphi}}_k$, where $\bm{\varTheta}_k= \text{diag}\left\{e^{j\theta_{k,1}},\ldots,e^{j\theta_{k,N}}\right\} \in \mathbb{C}^{N\times N}$ and $\bar{\bm{\varphi}}_k = \left[a_{k,1}, \hdots,a_{k,N}\right]^T \in \mathbb{R}^{N\times 1}$. 

\subsubsection{Optimization of  phase shifts for the WIT}
The optimal design of phase shifts is given in the following proposition.

\begin{proposition} 
\label{Prooptimal_ris_phase}
The optimal RIS phase shift of the $n$-th RE for the WIT during $\tau_k$ is derived as
\begin{align}
    & \theta_{k,n}^* = arg(g_{d,k}) - arg(\bm{g}_{u,k,n}) +arg(\bm{g}_{r,k,n}), \\
    & ~~~~~~~~~~~~~~~~~~~~~~~~~~~~\forall k, \forall n \nonumber,
\end{align}
where $\bm{g}_{u,k,n}$ and $\bm{g}_{r,k,n}$ denote the $n$-th element of the vector $\bm{g}_{u,k}$ and $\bm{g}_{r,k}$, respectively.

\end{proposition}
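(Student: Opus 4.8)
The plan is to exploit the separation between phase shifts and amplitude reflection coefficients induced by the parametrization $\bm{\varphi}_k = \bm{\varTheta}_k \bar{\bm{\varphi}}_k$. First I would observe that the denominator of $\gamma_k$ in \eqref{gamma}, namely $\bm{\varphi}_k^H \bm{Q}_r \bm{\varphi}_k \sigma_v^2 + \sigma_r^2 = \sigma_v^2 \sum_{n=1}^N |\bm{g}_{r,k,n}|^2 a_{k,n}^2 + \sigma_r^2$, depends only on $\bar{\bm{\varphi}}_k$ because $\bm{Q}_r$ is diagonal and $|\phi_{k,n}|^2 = a_{k,n}^2$. The same is true of the power budget \eqref{Pr_WIT3}, since $\bm{F}_k = p_k \bm{G}_{u,k} + \sigma_v^2 \bm{I}_N$ is diagonal, and of the per-element amplitude constraints in \eqref{RE_WIT}. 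Consequently, for any fixed amplitude vector $\bar{\bm{\varphi}}_k$, maximizing $\gamma_k$ over the phases $\{\theta_{k,n}\}$ is equivalent to maximizing the numerator $|\bm{b}_k^H \bm{\varphi}_k + g_{d,k}|^2$, equivalently $|\bm{b}_k^H \bm{\varphi}_k + g_{d,k}|$, and this inner maximization carries no active constraints.

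Next I would expand the numerator termwise. From $\bm{b}_k^H = \bm{g}_{r,k}^H\,\text{diag}(\bm{g}_{u,k})$ one gets $[\bm{b}_k^H]_n = \overline{\bm{g}_{r,k,n}}\,\bm{g}_{u,k,n}$, hence
\[
\bm{b}_k^H \bm{\varphi}_k + g_{d,k} = g_{d,k} + \sum_{n=1}^N \overline{\bm{g}_{r,k,n}}\,\bm{g}_{u,k,n}\,a_{k,n}\,e^{j\theta_{k,n}}.
\]
By the triangle inequality, $|\bm{b}_k^H \bm{\varphi}_k + g_{d,k}| \le |g_{d,k}| + \sum_{n=1}^N |\bm{g}_{r,k,n}|\,|\bm{g}_{u,k,n}|\,a_{k,n}$, where I used $a_{k,n} \ge 0$, and equality holds if and only if every nonzero term of the sum is co-phased with $g_{d,k}$. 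Since the phase of the $n$-th term is $-\arg(\bm{g}_{r,k,n}) + \arg(\bm{g}_{u,k,n}) + \theta_{k,n}$, imposing equality with $\arg(g_{d,k})$ gives precisely $\theta_{k,n}^* = \arg(g_{d,k}) - \arg(\bm{g}_{u,k,n}) + \arg(\bm{g}_{r,k,n})$, which is the claimed expression. Crucially, $\theta_{k,n}^*$ does not depend on $\bar{\bm{\varphi}}_k$, so the phase design decouples completely from the amplitude design and is optimal for \textbf{P4} for every feasible amplitude vector; the residual problem in $\bar{\bm{\varphi}}_k$ is what the SCA procedure then handles.

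The argument is essentially elementary; the only care needed is the bookkeeping of the Hermitian conjugates in $\bm{b}_k^H$ and $\bm{g}_{r,k}^H$, and the degenerate situations in which $g_{d,k} = 0$ or some channel coefficient vanishes — there the relevant phase is immaterial and the stated formula holds by convention. I therefore do not anticipate a genuine obstacle, only the need to state the phase/amplitude decoupling cleanly so that the subsequent amplitude subproblem is well posed.
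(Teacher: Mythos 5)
Your proof is correct and follows essentially the same route as the paper's: both rest on the observation that the diagonal structure of $\bm{Q}_r$ and $\bm{F}_k$ makes the SNR denominator and the amplification-power constraints independent of the phases, reducing the problem to co-phasing the reflected terms with the direct-link term $g_{d,k}$. The only difference is presentational: you obtain the co-phasing condition in a single termwise triangle-inequality step, whereas the paper first argues that the aggregate $\bm{b}_k^H\bm{\varTheta}_k\bar{\bm{\varphi}}_k$ must align with $g_{d,k}$ and then invokes a cited result to align the individual terms, so your version is self-contained and slightly more direct.
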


\begin{proof}
Please refer to Appendix B.
\end{proof}

\subsubsection{Optimization of amplitude reflection coefficients for the WIT}
For  \textbf{P4}, the optimal design of phase shifts shown in Proposition \ref{Prooptimal_ris_phase}  holds because the value of the amplification power in \eqref{RE_WIT} and \eqref{Pr_WIT3} and the noise power in the denominator of \eqref{gamma} are independent with the phase shift of each RE. In addition, optimizing $\theta_{k,n}$ is equivalent to maximizing the objective function in \eqref{gamma} \cite{long2021Active}. With the optimal phase shifts in Proposition \ref{Prooptimal_ris_phase}, we proceed to optimize the RIS amplitude reflection coefficients for the WIT. In particular, $\mbox{\textbf{P4}}$ can be simplified as

\begin{align}
  (\mbox{\textbf{P4-1}}) \max_{\bar{\varphi}_k} \,\,\,\,\,\,&\bar{\gamma _k} = \frac{{p_k}\left|{\bar{\bm{b}}_k^H} \bar{\bm{\varphi}}_k +\left|g_{d,k}\right| \right|^2} {{{\bar{\bm{\varphi}}_k^H}{\bm{Q}_r}{\bar{\bm{\varphi}}_k}}{\sigma_v^2} + {\sigma_r^2}}  \label{pro_P32}\\
  s.t. \,\,\,\,\,\,& {\bar{\bm{\varphi}}_k^H}{\bm{F}_k}{\bar{\bm{\varphi}}_k} \le P_r,~\forall k, \label{Pr_WIT4}\\
  & a_{k,n} \le {a_{n,\max}},~\forall k,~\forall n,\label{RE_WIT2}
\end{align}
where $\bar{\gamma}_k= \left|\gamma_k\right|$, and $\bar{\bm{b}}_k$ is element-wise modulus of $\bm{b}_k$. To deal with the non-convexity of the objective function \eqref{pro_P32}, we introduce a new auxiliary variable ${n_k}={{\bar{\bm{\varphi}}_k^H}{\bm{Q}_r}{\bar{\bm{\varphi}}_k}}{\sigma_v^2} + {\sigma_r^2}$, which denotes the noise power received at the RIS. Then, \textbf{P4-1} can be  converted into the following equivalent form
\begin{align}
   (\mbox{\textbf{P4-2}}) \max_{\bar{\gamma}_k,n_k,\bar{\bm{\varphi}}_k} & \,\,\bar{\gamma}_k \label{pro_P33}\\
  s.t.  & \sqrt{p_k}\left(\bar{\bm{b}}_k^H \bar{\bm{\varphi}}_k + \left|g_{d,k}\right|\right)\ge \sqrt{n_k \bar{\gamma}_k},~\forall k, \label{SCA1}\\
  & \bar{\bm{\varphi}}_k^H \bm{Q}_r \bar{\bm{\varphi}}_k \sigma_v^2 + \sigma_r^2 \le n_k,~\forall k, \label{noise}\\
  & \mbox{\eqref{Pr_WIT4} and \eqref{RE_WIT2}}.\nonumber
\end{align}

However, the constraint \eqref{SCA1} is still non-convex. To solve \textbf{P4-1}  efficiently, we exploit the SCA algorithm  to approximate the square root by a convex upper-bound in each iteration. Define ${\bar{\gamma}_k(t)}$ and $ {n_k(t)}$ as the iterative optimization variables after the $t$-th step iteration. In terms of $\left\{\bar{\gamma}_k\left(t\right), n_k\left(t\right)\right\} $, the first-order Taylor polynomial is used to approximate ${\sqrt{n_k \bar{\gamma}_k}}$, which is given by
\begin{equation}
\label{SCA}
    \begin{aligned}
      \sqrt{n_k \bar{\gamma}_k} \le &\mathcal{G} (\bar{\gamma}_k, n_k;t) \\
      = &\sqrt{\bar{\gamma}_k(t) n_k(t)}+\frac{1}{2}\left(\frac{n_k(t)}{\bar{\gamma}_k(t)}\right)^{\frac{1}{2}}\left[\bar{\gamma_k}-\bar{\gamma}_k(t)\right]
      \\&+\frac{1}{2}\left(\frac{\bar{\gamma}_k(t)}{n_k(t)}\right)^{\frac{1}{2}}\left[n-n_k(t)\right].
    \end{aligned}
\end{equation}
  
Based on  \eqref{SCA}, \eqref{SCA1} can be rewritten as
\begin{equation}
  \sqrt{p_k}\left(\bar{\bm{b}}_k^H \bar{\bm{\varphi}}_k + \left|g_{d,k}\right|\right)\ge \mathcal{G} (\bar{\gamma}_k, n_k;t),~\forall k.\label{SCA2}
\end{equation}

Then, $\textbf{P4-2}$ can be  reformulated as the following  problem
\begin{align}
  (\mbox{\textbf{P4-3}}) \max_{\bar{\gamma}_k,n_k,\bar{\bm{\varphi}}_k} & \,\,\bar{\gamma}_k, \label{pro_P3SCA}\\
  s.t.  &  \mbox{\eqref{Pr_WIT4}, \eqref{RE_WIT2}, \eqref{noise} and \eqref{SCA2}.}\nonumber
\end{align}

As \textbf{P4-3} is convex and  can be solved by the inter-point method. We then discuss the initialization of ${\bar{\gamma}_k(t)}$ and $ {n_k(t)}$. First, we propose an initial solution $\bar{\bm{\varphi}}_k(0)$ by solving a simple feasible version of problem \textbf{P4-1}, i.e., $\bar{\bm{\varphi}}_k(0)$, satisfying the constraints \eqref{Pr_WIT4} and \eqref{RE_WIT2}. Then, the reasonable initialization of ${\bar{\gamma}_k(0)}$ and $ {n_k(0)}$ is given by
\begin{align}
  \bar{\gamma}_k(0) &= \frac{{p_k}\left|{\bar{\bm{b}}_k^H} \bar{\bm{\varphi}}_k(0) +\left|g_{d,k}\right| \right|^2} {{{\sigma_v^2}\bar{\bm{\varphi}}_k^H(0){\bm{Q}_r}\bar{\bm{\varphi}}_k}(0) + {\sigma_r^2}}, \label{init_gamma}\\
  n_k(0)&={\sigma_v^2}\bar{{\bm{\varphi}}}_k^H(0) \bm{Q}_r \bar{\bm{\varphi}}_k(0) + {\sigma_r^2}.\label{init_n}
\end{align}

With the initialization described in \eqref{init_gamma} and \eqref{init_n}, the optimal amplitude reflection coefficients for the WIT, denoted by $\bar{\bm{\varphi}}_k^*$,  can be obtained by iteratively solving \textbf{P4-3} until the convergence is achieved. 
 As a result, the RIS reflecting coefficients during $\tau_k$ can be calculated by
\begin{equation}
    \bm{\Phi}_k^* = \text{diag}\left(\bm{\varTheta}_k^* \bar{\bm{\varphi}}_k^* \right), ~\forall k,
\end{equation}
where $\bm{\varTheta}_k^* = \text{diag} \{e^{j \theta_{k,1}^*},\ldots, e^{j \theta_{k,N}^*} \}$.

The detailed description of optimizing the RIS reflecting coefficients in  $\mbox{\textbf{P3}}$ is summarized in the Algorithm \ref{al:sca}.

\begin{figure}[h]
    \begin{algorithm}[H]
    \renewcommand{\algorithmicrequire}{\textbf{Input:}}
    \renewcommand{\algorithmicensure}{\textbf{Output:}}
		\caption{\label{al:sca}SCA-based RIS reflecting coefficients for the WIT}
		\begin{algorithmic}[1]
		    \REQUIRE{$\{\bm{w}_k\}, ~\bm{p}, ~ \forall k $}
        \ENSURE {$\{\bm{\Phi}_k^*\}, ~ \forall k$}
	        \STATE {Initialization: $\bar{\gamma}_k(t)$, $n_k(t)$,  $\bar{\bm{\varphi}}_k(t)$, and $t=0$.}
            \STATE {Obtain $\theta_{k,n}^*$ in Proposition \ref{Prooptimal_ris_phase} and have $\bm{\varTheta}_k^* $.}
      \REPEAT
      \STATE{$t=t+1$.}
			\STATE {Update $\bar{\gamma}_k(t)$, $n_k(t)$, $\bar{\bm{\varphi}}_k(t)$ by solving \textbf{P4-3}.}
			\UNTIL{the convergence is achieved.}
			\STATE {Obtain $\bm{\Phi}_k^* = \text{diag}\left(\bm{\varTheta}_k^* \bar{\bm{\varphi}}_k^* \right)$, where $\bar{\bm{\varphi}}_k^* = \bar{\bm{\varphi}}_k(t)$.}
			\RETURN {$\{\bm{\Phi}_k^*\}$.}
		\end{algorithmic}
	\end{algorithm}
\end{figure}

\subsection{Algorithm Summarization and Analysis}

Based on the above analysis, the algorithm for solving \textbf{P1} is summarized in Algorithm \ref{al:ao}. Based on the optimality analysis,the objective function of \textbf{P1} is a non-decreasing function. Due to the power budget constraint \eqref{W0_WET}, \eqref{Pr_WET} and \eqref{Pr_WIT}, the optimal objective value of problem \textbf{P1} is bounded. Hence, the convergence of Algorithm \ref{al:sca} can be thus guaranteed, which will be also confirmed by numerical simulations in Section \ref{Simulation}.

\subsubsection{Complexity Analysis}
The computational complexity of our proposed AO algorithm is analyzed as follows, which contains the linear MMSE-based receive beamforming calculation, the SDR algorithm and the SCA algorithm in each iteration. For the linear MMSE-based receive beamforming optimization, we derive a closed-form solution to the \eqref{MMSE}, and the approximate worst-case computational complexity is given by $\mathcal{O}\left(KL\max(N,L)^2 \right)$. According to \cite{luo2010Semidefinite}, for the subproblem of SDP-based transmit beamforming optimization, the worst-case computational complexity is $\mathcal{O}\left( \max(K,M)^{4.5}\log(1/ \epsilon ) \right)$, where $\epsilon $ is the computational accuracy of the interior-point method in CVX. Similarly, for the SDP-based RIS reflecting coefficients for the WET and resource allocation optimization, the worst-case computational complexity is $\mathcal{O}\left( \mathcal{I}_\tau \max(N,K)^{4.5}\log(1/ \epsilon) \right)$, where $\mathcal{I}_\tau$ is the iteration number for updating $\tau_0$. For the SCA-based RIS reflecting coefficients optimization in the WIT, the computational complexity is less than $\mathcal{O}\left( \mathcal{I}_{S}KN^{3.5} \log(N/ \epsilon) \right)$, where $\mathcal{I}_{S}$ is the iteration number for the SCA algorithm. Thus, the computational complexity of the overall AO algorithm is given by
\begin{equation}
\begin{aligned}
  \mathcal{O}\left(\mathcal{I}_{A} \left(KL\max(N,L)^2+\max(K,M)^{4.5}\log(1/ \epsilon) \right.\right.\\
  \left.\left. + \mathcal{I}_\tau \max(N,K)^{4.5}\log(1/ \epsilon)+\mathcal{I}_{S}KN^{3.5} \log(N/ \epsilon) \right)\right),
\end{aligned}
\end{equation}
where $\mathcal{I}_{A}$ denotes the number of iterations required for convergence.

\subsubsection{Optimality Analysis}
As the formulated problem \textbf{P1} is extremely non-convex, it is very difficult to obtain the globally optimal solution. To solve \textbf{P1} efficiently, we propose an efficient AO algorithm to obtain the suboptimal solutions. Firstly, we obtain the optimal receive beamforming $\left\{\bm{w}_k\right\}$ in a closed-form, which are the globally optimal solutions. With the obtained receive beamforming solutions, the formulated problem can be simplified but is still non-convex. Secondly, the SDR technique is adopted to optimize the transmit beamforming, the RIS reflecting coefficient matrices for the WET phase, the transmit power at each user, and the network time scheduling. In particular, we prove the obtained solutions of \textbf{P2-1} and \textbf{P2-2} are rank-one. Since the tightness of applying SDR can be guaranteed, the obtained solutions $\{{\bm{w}_0,\bm{\Phi}_0,\bm{\tau},\bm{p}}\}$ are globally optimal \cite{hua2022PowerEfficient}. Then, we use the one-dimensional search method to exploit the optimal energy transmission time $\tau_0$ by setting an appropriate step size. Thirdly, for the optimization of RIS reflecting coefficients for the WIT phase, we decompose \textbf{P4} into two sub-problems. On the one hand, the optimal phase shifts have been derived in a closed-form which has been proved. On the other hand, \textbf{P4-1} is solved by Algorithm \ref{al:sca}, which obtains the reflection amplitudes of $\{\bm{\Phi}_k\}$ are near-optima of the original problem \cite{razaviyayn2014Successive}.  Hence, Algorithm \ref{al:ao} can be used to obtain the near-optimal solutions to \textbf{P1} with a high accuracy.

\begin{figure}[h]
    \begin{algorithm}[H]
    \caption{\label{al:ao}AO algorithm for \textbf{P1}}
    \begin{algorithmic}[1]
        \STATE {Initialization: $\bm{w}_0,\{\bm{w}_k\},\bm{\Phi}_0,\{\bm{\Phi}_k\},\tau_0,\bm{\tau},\bm{p},~\forall k$.}
        \REPEAT
            \STATE {Given $\bm{\Phi}_k$ and $\bm{p}$, update $\{\bm{w}_k\}$  by \eqref{MMSE}.}
              \STATE {Given $\bm{\Phi}_0$, $\tau_0$, $\{\bm{\Phi}_k\}$, $\{\bm{w}_k\}$, update $\bm{w}_0$ with  by solving \textbf{P2-1}.} 
              \STATE {Given $\bm{w}_0$, $\{\bm{w}_k\}$, $\{\bm{\Phi}_k\}$, update $\bm{\Phi}_0$, $\tau_0$, $\bm{\tau}$ and $\bm{p}$ by Algorithm \ref{al:sdp2}.}
              \STATE {Given $\{\bm{w}_k\}$ and $\bm{p}$, update \{$\bm{\Phi}_k\}$  by Algorithm \ref{al:sca}.}
            \UNTIL {$\sum_{k=1}^K R_k$ converged.}
            \RETURN {$\bm{w}_0^*,\{\bm{w}_k^*\},\bm{\Phi}_0^*,\{\bm{\Phi}_k^*\},\tau_0^*,\bm{\tau}^*,\bm{p}^*$.}
    \end{algorithmic}
  \end{algorithm}
\end{figure}

\section{Numerical Results}
\label{Simulation}

\begin{figure}[t]
  \centering
  \includegraphics[width=0.45\textwidth]{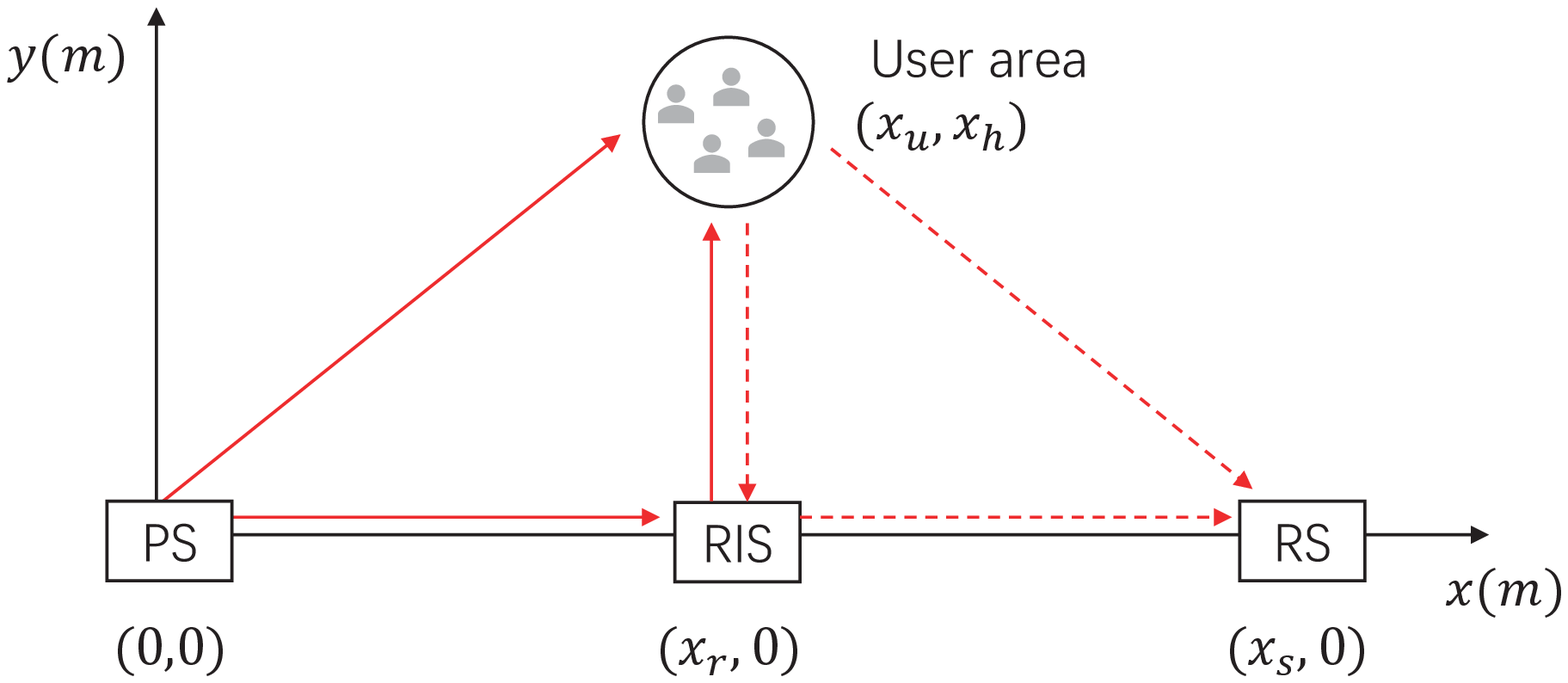}
  \caption{\label{fig:placement}Placement model of simulation setup.
  }
\end{figure}

In this section, numerical results are presented to evaluate the performance of the proposed scheme. As shown in Figure \ref{fig:placement}, we consider that the simulated network deployment is a 2-D coordinate system, where the coordinates of the PS, the RIS, and the RS are given as (0,0), ($x_r$, 0), and ($x_s$,0), respectively, the users are randomly deployed within a circular area centered at ($x_u$, $x_h$) with radius 1m. We follow the channel model  considered in \cite{zeng2022Throughput}. In particular, the large-scale path-loss is modeled as $L=A(d/d_0)^{-\alpha}$, where $A$ is the path-loss at the reference distance $d_0 = 1$m and set as $A = -30 $dB, $d$ denotes the distance between two nodes, and $\alpha$ is the path-loss exponent. For the RIS related links, the path-loss exponent is set as 2.2 since the location of RIS can be carefully designed to avoid the severe signal blockage. While  the path-loss exponents for the RIS unrelated links are set as 3.5 due to the users' random deployment. We assume the direct link channels follow Rayleigh fading but the RIS related channels follow Rician fading. Specifically, the small-scale channel from the PS to the RIS can be expressed as 
\begin{equation}
\bm{H_r}=\left(\sqrt{ \frac{\beta_{r}}{\beta_{r}+1}}\bar{\bm{H}}_r^\text{LoS} + \sqrt{ \frac{1}{\beta_{r}+1}} \bar{\bm{H}}_r^\text{NLoS} \right)
\end{equation}
where  $ \beta_{r}$ is the Rician factor for the PS-RIS link, $\bar{\bm{H}}_r^\text{LoS}$ denotes the deterministic line of sight (LoS) component, and $\bar{\bm{H}}_r^\text{NLoS}$ denotes the non-LoS compotent with circularly symmetric complex Gaussian random variables with zero mean and unit variance. The other channels can be similarly defined.
Unless otherwise stated, other parameters are given as follows:  $\beta_r = 10$ \cite{guo2020Weighted}, $\rho = 0.8$, $\sigma_v^2 = \sigma_r^2 = -90$dBm, $P_0 = 20$dBm \cite{long2021Active}, $ P_r = 20$dBm, $a_{max} = 25$dB \cite{amato2018Tunneling}, $ N = 10$, $K = 4$, $M=4$, $L=4$, $x_r = 10$m, $x_u = 10$m, $x_s = 20$m, and $x_h = 2$m. 

For comparisons, we also evaluate the performance of the following  benchmark schemes:

\begin{itemize}
\item[(1)] Active RIS-aided single-antenna WPCN scheme (Active-SA). \item[(2)] Passive RIS-aided multi-antenna WPCN scheme (Passive-MA). \item[(3)] Active RIS-aided multi-antenna WPCN with uniform energy beamforming scheme (Active-MA-UEBF).
\end{itemize}

Notice that for  the multi-antenna schemes, the number of antennas is set as  4 in the PS and RS. In addition, we set the number of REs in the Passive-MA scheme as $N=100$ to show the superiority of the proposed scheme. 

Before performance comparisons, we first show the convergence performance of the proposed AO algorithm in Figure \ref{fig:converge}. One can  observe that as the number of iterations increases, the sum-rate first increases  but finally converges to a constant after nearly 8 iterations. This demonstrates
that the convergence of the proposed scheme can be achieved quickly. The other observation is that the effect of the parameter setting on convergence is limited. 

\begin{figure}[ht]
   \centering
   \includegraphics[width=0.45\textwidth]{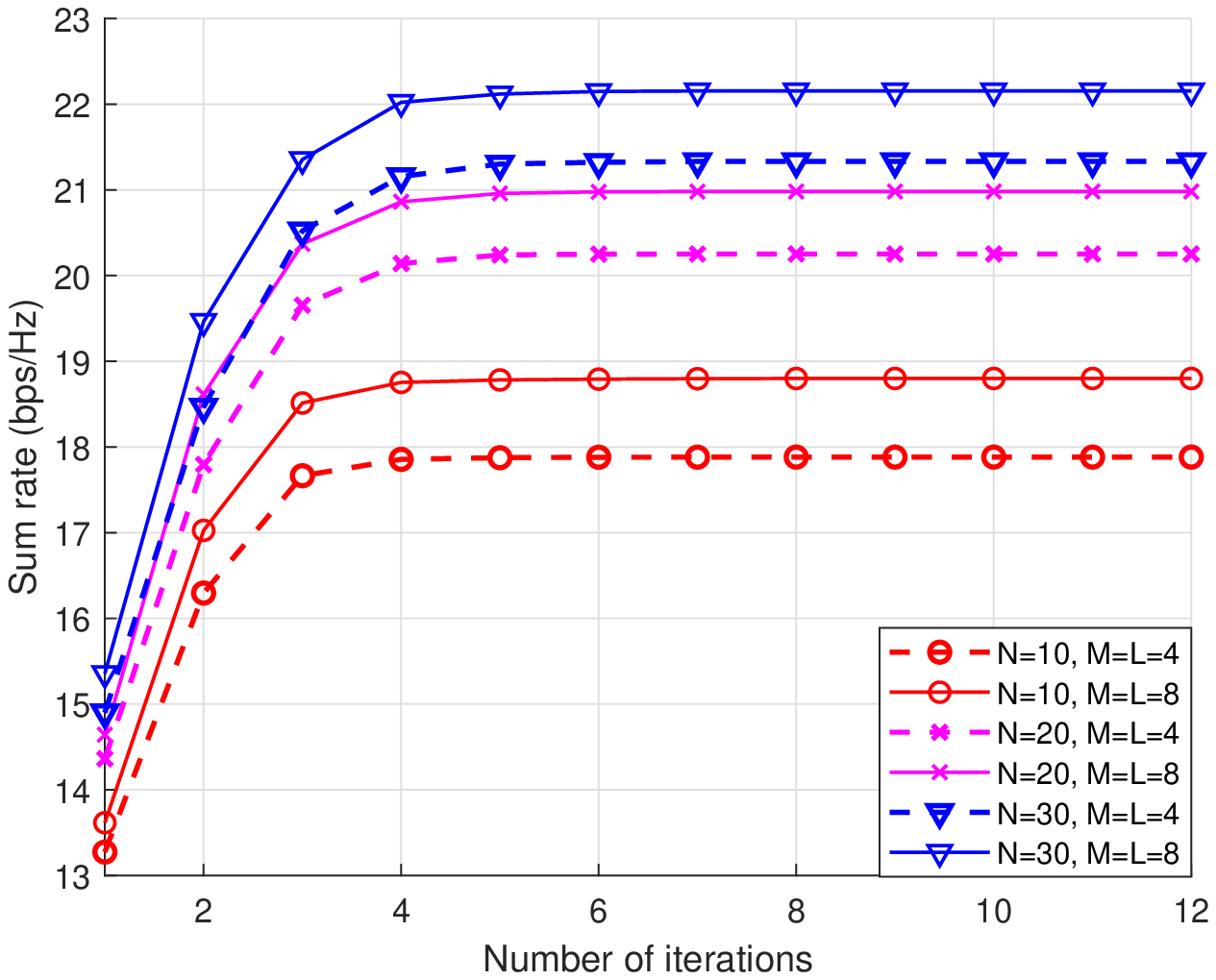}
   \caption{\label{fig:converge} Convergence behavior of the proposed scheme under different parameter settings.   }
\end{figure}

Figure \ref{fig:p0} shows the impact of the transmit power at the PS (i.e., $P_0$) on  the sum-rate when the RIS's maximum reflecting power $P_r = 10,\,20$ dBm and the RIS's maximum amplitude reflection coefficient $a_{\max} = 10,\,25$ dB, respectively. In general, the proposed scheme outperforms the Active-SA scheme with the same parameters, which confirms that the assistance of  multiple antennas can achieve a significant performance gain by constructing the transmit beamforming at the PS and the receive beamforming at the RS. For a given $a_{max} = 25$ dB, our proposed scheme with 10 REs can achieve 415.48\% performance gain compared to the passive RIS scheme with 100 REs when $P_0 = 20$ dBm. Indeed, the active RIS can considerably make use of its amplification characteristic to amplify the energy signals at low transmit power and thereby realize a superior capability at the cost of additional power consumption. For a given $a_{max} = 10$ dB, it can be seen that the sum-rates achieved by the proposed scheme with $P_r = 20$ dBm and $P_r = 10$ dBm are almost the same, which implies that the amplification power constraints defined in \eqref{Pr_WET} and \eqref{Pr_WIT} are inactive since $a_{\max}$ is limited for the small transmit power. Note that, the performance gap is significant between the scheme with $a_{\max} = 25$ dB and $a_{\max} = 10$ dB because $a_{\max}$ directly limits the amplitude reflection coefficient of the active RIS. In addition, the sum-rate of the passive-MA scheme is generally lower than the active RIS schemes with the same REs and the passive RIS needs to be equipped with more REs (e.g., 100 REs) to achieve the similar performance.

\begin{figure}[t]
  \centering
  \includegraphics[width=0.45\textwidth]{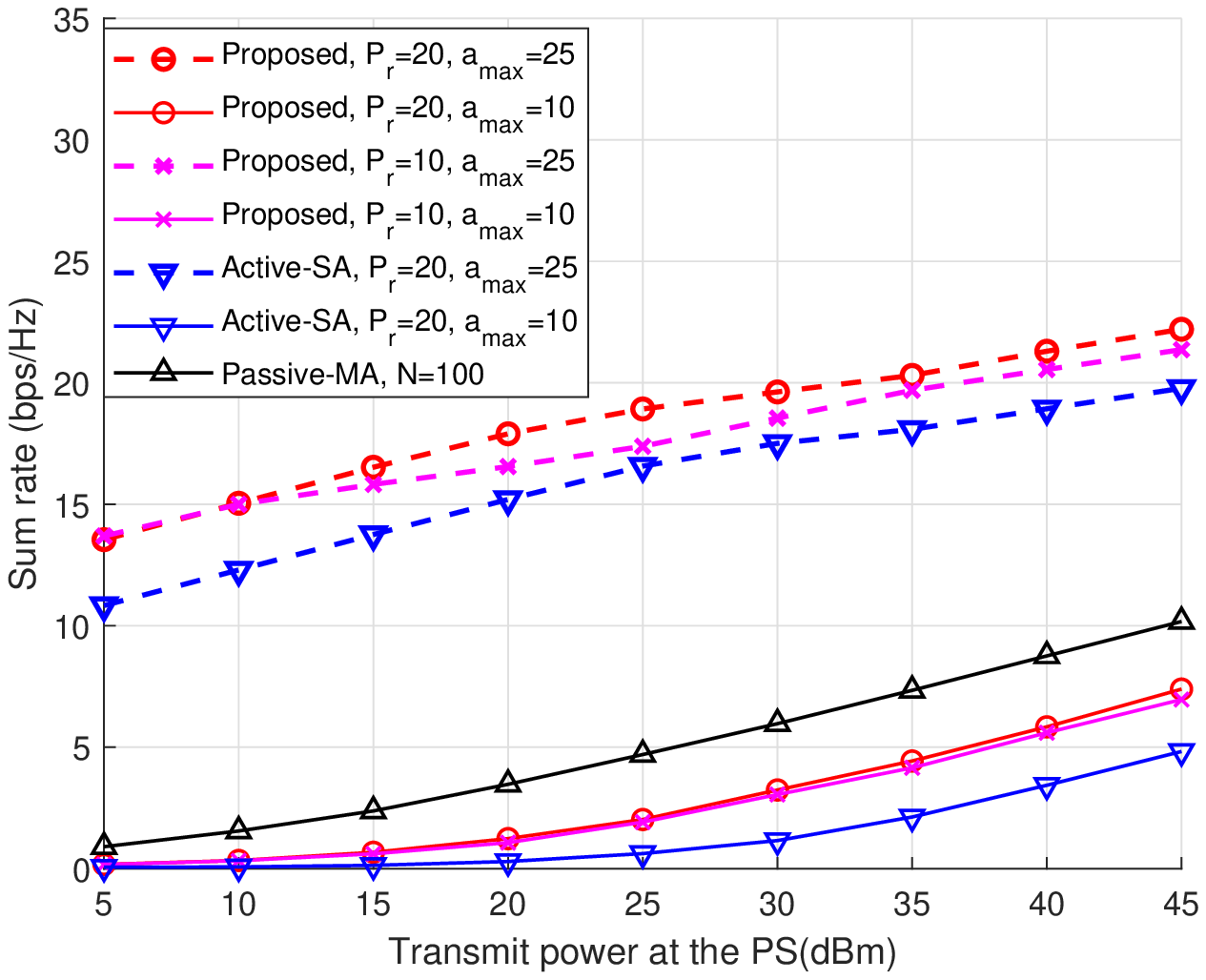}
  \caption{\label{fig:p0} Sum-rate versus the transmit power at the PS. }
\end{figure}

\begin{figure}[t]
  \centering
  \includegraphics[width=0.45\textwidth]{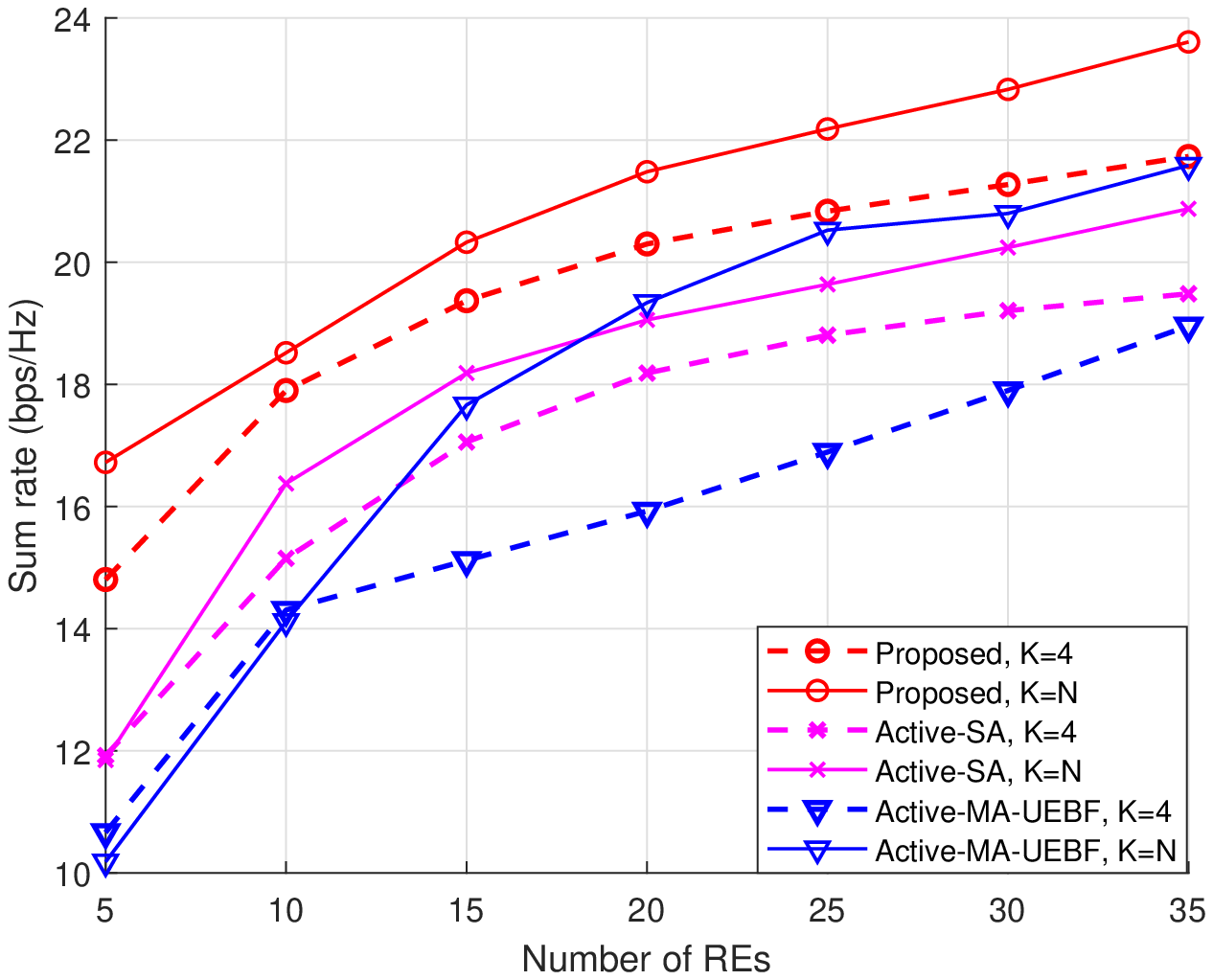}
  \caption{\label{fig:fig_element}Sum-rate versus number of reflecting elements at the RIS.}
\end{figure}

In Figure \ref{fig:fig_element}, we evaluate the sum-rate versus the number of reflecting elements at the RIS. It can be seen that the proposed schemes can achieve a higher performance gain compared with the other benchmark schemes. With an increasing number of reflecting elements, the sum-rate increases due to the fact that more transmission links can be provided for both the WET and the WIT. In addition, to investigate the best system performance, the maximum number of users is set to be equal to the number of REs. Since the active RIS can amplify the incident signals, a limited number of REs is sufficient to reach the desired SNR. Therefore, the size of active RIS can be reduced, making it applicable to the scenario where the space for the RIS deployment is limited.

\begin{figure}[t]
  \centering
  \includegraphics[width=0.45\textwidth]{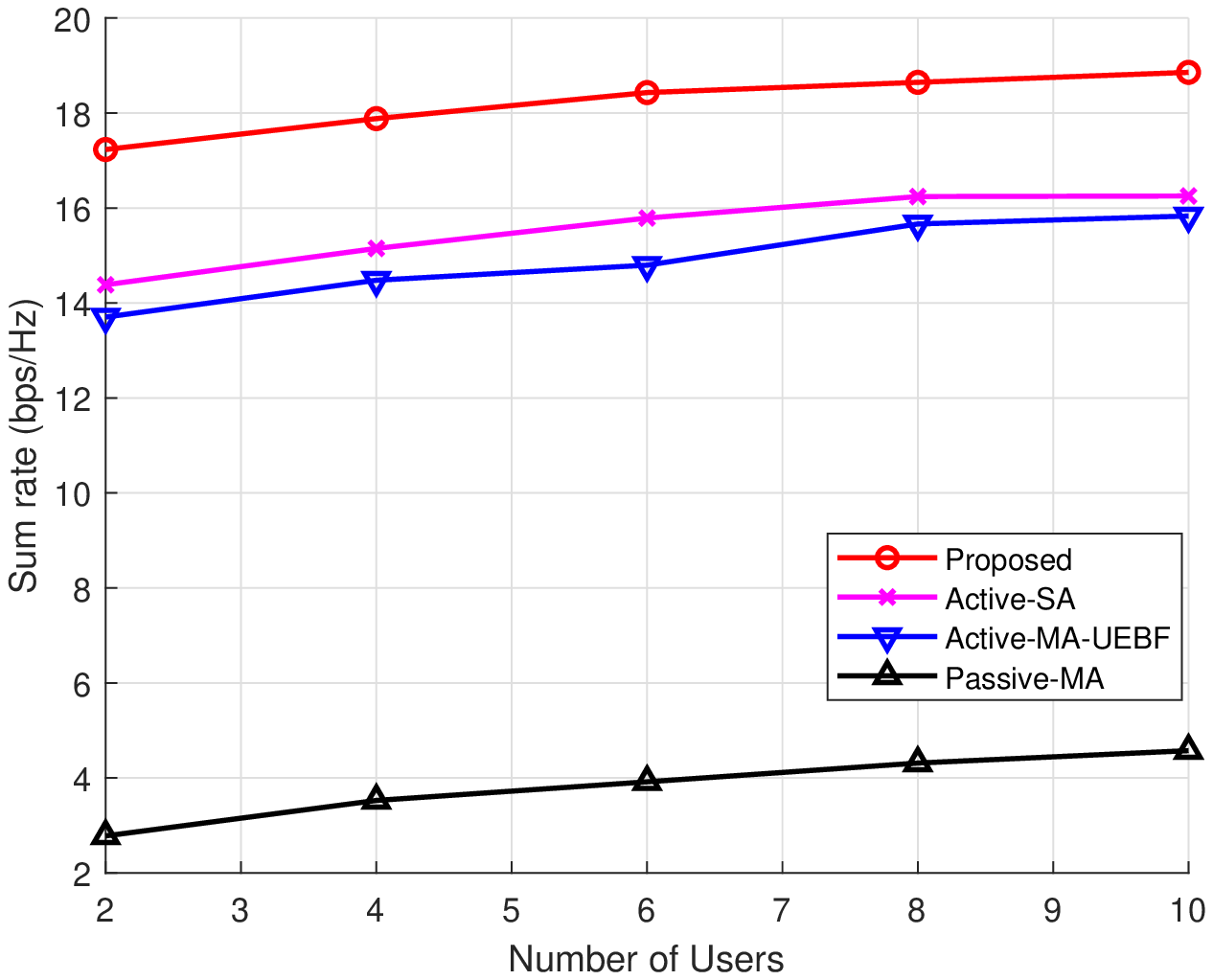}
  \caption{\label{fig:usern100} Sum-rate versus the number of users. }
\end{figure}

In Figure \ref{fig:usern100}, we study the effect of number of user on the sum-rate.  As the number of users increases, the total amount of harvested energy by users improves, which results in a higher sum-rate. Nonetheless, when the number of users reaches a threshold, e.g. $K=8$, the sum-rate achieved by our proposed scheme becomes to be saturated. This is due to the fact that the increment of number of users reduces the energy transfer duration and the time allocated to each user for information transmission, which makes the sum-rate converge to a constant. Again, our proposed scheme notably outperforms the other benchmark schemes. For example, when the number of users is $K=4$, our proposed scheme can achieve 17.78\% and 415.48\%  performance gain compared with the Active-SA scheme and the Passive-MA scheme with 100 REs, respectively.

\begin{figure}[t]
  \centering
  \includegraphics[width=0.45\textwidth]{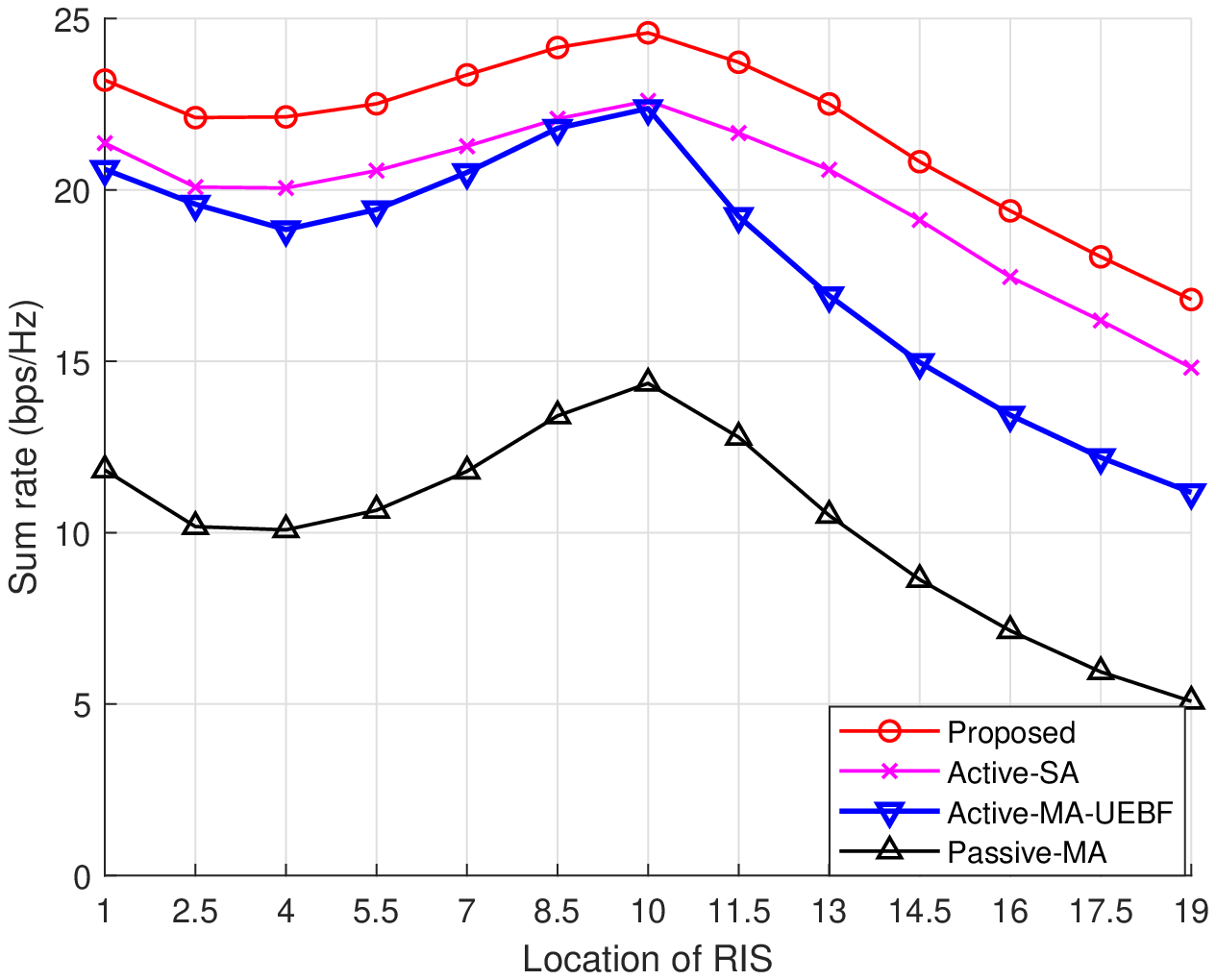}
  \caption{\label{fig:location}Sum-rate versus x-coordinate of the RIS.  }
\end{figure}

In Figure \ref{fig:location}, we plot the sum-rate versus the horizontal ordinate of the RIS. As  $x_r$ varies, the sum-rates of all schemes first increase but then decrease. Compared to the scenario that the RIS is close to the RS, by deploying the RIS near the PS, e.g., $x_r = 1$, the sum-rate can be improved. It is because the users can harvest more energy assisted by the active RIS. Moreover, we can observe that the sum-rate is maximized at $x_r =10$, where the reflecting link between the active RIS and each user is strongest so the users can benefit from a larger amplification and reflection gain. However, when the RIS is neither close to the PS nor the users, both the PS-RIS link and the RIS-users links become weak, which results in the reduce of harvested energy. Furthermore, since the Active-MA-UEBF scheme adopts the uniform energy beamforming, the energy signals cannot adaptively align with the direction of the desired channels, which results in a low WET efficiency. In addition, the schemes with the active RIS can achieve a much better performance than the passive RIS scheme, which demonstrates that the active RIS with the amplification functionality can significantly mitigates the double-fading effect. The above observation demonstrates that the location of the active RIS should be carefully designed.

\section{CONCLUSIONS}
In this paper, we have proposed an active RIS assisted relaying scheme to enhance the performance of multiuser multi-antenna WPCN, which is involved in both the WET from the PS to users and the WIT from users to the RS. To further enhance system performance, both transmit beamforming  at the PS and receive beamforming at the RS have been designed.  We have formulated a system sum-rate maximization problem by jointly optimizing the RIS reflection coefficients for both the WET and the WIT, transmit and receive beamforming vectors, transmit power at each user, and network time scheduling. As the formulated problem is non-convex, we have proposed an AO algorithm with linear MMSE, SDR and SCA techniques to solve it efficiently. Finally, numerical results have been provided to confirm the performance superiority of the proposed scheme.


\section*{Appendix}
\subsection*{A. Proof of Proposition 1}
\label{appendix:rank-one}
The Lagrangian function of \textbf{P2-1} can be expressed as
  \begin{equation}\label{pro_lag}
    \begin{aligned}
  \mathcal{L} =& \sum_{k=1}^K \lambda_k \beta \text{Tr} (\bm{H}_{d,k} \bm{W}_0) -\xi \text{Tr} (\bm{W}_0)\\  
  &+ \text{Tr} (\bm{\Omega}\bm{W}_) + \delta,~\forall k,
    \end{aligned}
  \end{equation}
  where $\lambda_k \ge 0$, $\xi \ge 0$, and $\bm{\Omega}\in \mathbb{H}^{M} $  are the Lagrange multipliers associated with constraints \eqref{E_k3}, \eqref{Pr_WIT2}, and \eqref{W_0}, respectively,  $\delta$ denotes the term unrelated with $\bm{W}_0$. The Karush-Kuhn-Tucker (KKT) conditions  of \textbf{P2-1} are given as follows
  \begin{align}
    &\frac{\partial{\mathcal{L}}} {\partial{\bm{W}_0}} = \sum_{k=1}^K \lambda_k^* \beta \bm{H}_{d,k} - \xi^* \bm{I}_M + \bm{\Omega}^* = 0,  \label{pro_lag_a}\\
    & \bm{\Omega } ^* \bm{W}_0^*=0 \label{pro_lag_b},
  \end{align}
  where $\lambda_k^*$, $\xi^*$ and $\bm{\Omega}^*$ are the optimal Lagrangian multipliers for the dual problem of $\mbox{\textbf{P2-1}}$.  It can be proved that $\lambda_k^* > 0$ and $\xi^* >0$ since  the constraints \eqref{E_k3} and \eqref{Pr_WIT2}are equalities in the optimal condition. Based on \eqref{pro_lag_a} and \eqref{pro_lag_b}, it is straightforward to obtain the following equality
  \begin{align}
  \label{KKTEquality}
  (\xi^* \bm{I}_M - \sum_{k=1}^K \lambda_k^* \beta \bm{H}_{d,k}  ) \bm{W}_0^*=0
  \end{align}

  According to \cite{xu2021Reconfigurable}, $\text{rank} (\xi^* \bm{I}_M - \sum_{k=1}^K \lambda_k^* \beta \bm{H}_{d,k}) \ge M-1$ due to the fact that  $\bm{H}_{d,k}$ for $\forall k$ are independently distributed. Thus, from \eqref{KKTEquality}, we can obtain that $\text{rank}(\bm{W}_0) \le 1$. It is obvious that $\bm{W}_0 = \bm{0}$ is not the optimal solution to \textbf{P2-1}. Hence, we  derive that $\text{rank}(\bm {W}_0) = 1$, which thus proves Proposition \ref{RankOne}.

  \subsection*{B. Proof of Proposition 2}
  \label{appendix:phase}
  Since $\bm{Q}_r$ and $\bm{F}_k$ are diagonal matrices, we observe  that the noise power in the denominator of \eqref{pro_phik} and the amplification power in \eqref{RE_WIT} and \eqref{Pr_WIT3} are  independent of the phase shift of each RE. Therefore, maximizing $\gamma_k$ with respect to $\bm{\varTheta}_k$ is equivalent to the following optimization problem
    \begin{align}
      (\mbox{\textbf{P4-4}})\max_{{\bm{\varTheta}_k}} \,\,\,\,\,\,&\left|{\bm{b}_k^H} \bm{\varTheta}_k{\bar{\bm{\varphi}}_k} +{g_{d,k}} \right|^2 \label{pro_equiv_phase}\\
      s.t. \,\,\,\,\,\,& \left|\varTheta_{k,n}\right|=1,~\forall k , ~\forall n. 
    \end{align}
  
We rewrite the objective function as
    \begin{equation}
      \left|\bm{b}_k^H \bm{\varTheta}_k \bar{\bm{\varphi}}_k\right|^2 + \left|{g_{d,k}}\right|^2 + 2\left|\bm{b}_k^H \bm{\varTheta}_k \bar{\bm{\varphi}}_k\right|\left|{g_{d,k}}\right|\cos \alpha,
    \end{equation}
  where $\alpha = \arctan \frac{\mathrm{Im}(\bm{b}_k^H \bm{\varTheta}_k \bar{\bm{\varphi}}_k)}{\mathrm{Re}(\bm{b}_k^H \bm{\varTheta}_k \bar{\bm{\varphi}}_k)} - \arctan \frac{\mathrm{Im}({g_{d,k}})}{\mathrm{Re}({g_{d,k}})}$. Obviously, the maximum of $\left|{\bm{b}_k^H} \bm{\varTheta}_k{\bar{\bm{\varphi}}_k} +{g_{d,k}} \right|^2$ is achieved when $arg(\bm{b}_k^H \bm{\varTheta}_k \bar{\bm{\varphi}}_k)=arg(g_{d,k}) \triangleq \omega$. Let $v_k = [v_{k,1}, v_{k,2}, ...,v_{k,N}]^T \in \mathbb{R}^{N\times 1} $ and $\bm{\xi}_k = \text{diag}(\bm{b}_k^H)\bar{\bm{\varphi}}_k$. As ${\bm{b}_k^H} \bm{\varTheta}_k{\bar{\bm{\varphi}}_k} = \bm{v}_k^H\bm{\xi}_k$,  \textbf{P4-4} can be rewritten as 
    \begin{align}
      (\mbox{\textbf{P4-5}})\max_{{\bm{v}_k}} \,\,\,\,\,\,&\left|\bm{v}_k^H\bm{\xi}_k\right|^2 \label{pro_equiv_phase_2}\\
      s.t. \,\,\,\,\,\,& \left|\bm{v}_{k,n}\right|=1,~\forall k,~\forall n,\\
      & arg(\bm{v}_k^H\bm{\xi}_k) = \omega,~\forall k. 
    \end{align}
  
  Based on \cite{wu2019Intelligent}, the optimal solution to \textbf{P4-5} can be expressed as $\bm{v}_{k}^* = e^{j(\omega-arg(\bm{\xi}_{k}))} = e^{j(\omega-arg(\text{diag}(\bm{b}_k^H)\bar{\bm{\varphi}}_k))}$. Then, the optimal RIS phase shift for the $n$-th RE is expressed as $\theta_{k,n} = arg(g_{d,k}) - arg(\bm{b}_{k,n}^H) - arg(\bar{\bm{\varphi}}_{k,n})$. Finally, we obtain that $\theta_{k,n} = arg(g_{d,k}) - arg(\bm{g}_{u,k,n}) +arg(\bm{g}_{r,k,n})$, $\forall k$, $\forall n$. This completes the proof of Proposition \ref{Prooptimal_ris_phase}.

\bibliographystyle{gbt7714-numerical}
\bibliography{citations}

\end{document}